\documentclass[11pt]{article}
\usepackage[utf8]{inputenc}
\usepackage[T1]{fontenc}   
\usepackage[hidelinks]{hyperref}
\usepackage{url}             
\usepackage{microtype}     
\usepackage{mathtools}
\usepackage{xcolor}   
\usepackage{graphicx}    
\usepackage{physics}
\usepackage{amsmath}
\usepackage{amsthm}
\usepackage{multicol}
\usepackage{cleveref}
\usepackage{dirtytalk}
\usepackage{enumitem}

% Use this to find weird UNICODE character problems.
%\DeclareUnicodeCharacter{0301}{*************************************}

% Bibliography
%\usepackage[stlye=alphabetic]{biblatex}
%\usepackage{bibtex}
% If I wanted to use chicago instead:
%\usepackage[authordate,backend=biber]{biblatex-chicago}
%\addbibresource{references.bib}

% Bibliography
%\usepackage{biblatex-ieee}
\usepackage[style=ieee-alphabetic]{biblatex}
\addbibresource{references.bib}

% Font
\usepackage{libertinus}
\usepackage[libertine]{newtxmath}
%\usepackage[T1]{fontenc}

% URL style
%\urlstyle{rm}%: The font \rmfamily is used.
\urlstyle{sf}%: The font \sffamily is used.
%\urlstyle{tt}%: This is the default: \ttfamily.

% Margins
\usepackage[
    top=2.5cm,
    bottom=2.5cm,
    left=1.6cm,
    right=1.6cm,
]{geometry}

% For complexity classes:
\usepackage[classfont = bold,
            langfont  = caps,
            funcfont  = roman]{complexity}

\newlang{\UNSAT}{Unsat}
\newlang{\TAUT}{Taut}

% For theories:
\newcommand{\SOT}{\mathsf{S^1_2}}
\renewcommand{\PV}{\mathsf{PV}}
\newcommand{\PVO}{\mathsf{PV_1}}
\newcommand{\SOTEXP}{\mathsf{S^1_2} + \mathsf{1\textsf{-}EXP}}
\newcommand{\APC}{\mathsf{APC_1}}
\newcommand{\dWPHP}{\operatorname{dWPHP}}

% New function symbols 
\newcommand{\trans}[1]{||#1||}

% For proof systems:
\newcommand{\EF}{\mathsf{EF}}
\newcommand{\Gent}{\mathsf{G}}
\newcommand{\GentS}{\mathsf{G^*}}
\newcommand{\GOS}{\mathsf{G^*_1}}
\newcommand{\IPS}{\mathsf{IPS}}
\newcommand{\iEF}{\mathsf{iEF}}
\newcommand{\imp}{\mathsf{i}}
\renewcommand{\SC}{\mathsf{SC}}

%Erfan

\newcommand{\sound}{\operatorname{Sound}}
\DeclareMathOperator*{\foo}{Pr}
\def\<{\left<}
\def\>{\right>}
\newcommand{\setsize}{\operatorname{Size}}
\def\tt{\operatorname{CorrectFracTT}}
\def\hardav{\operatorname{Hard}_{1/4}^\text{A}}
\def\rra{\twoheadrightarrow}
\def\OEXP{\mathsf{1\textsf{-}EXP}}
\def\GE{\mathsf{G_{EXP}}}
\def\sigmaref{\Sigma^q_1\text{-}\operatorname{Ref}}
\def\LB{\ttable}
\def\refprop{\operatorname{Ref}}

% Theorem styles

    % Definition-like
\theoremstyle{definition}
\newtheorem{definition}{Definition}[section]

    % Theorem-like
\theoremstyle{plain}
\newtheorem{theorem}[definition]{Theorem}
\newtheorem{lemma}[definition]{Lemma}
\newtheorem{corollary}[definition]{Corollary}
\newtheorem{proposition}[definition]{Proposition}

\newtheorem*{theorem*}{Theorem}
\newtheorem*{corollary*}{Corollary}

    % Remark-like
\theoremstyle{remark}

% Our comments
%\usepackage{xcolor}

\DeclareMathOperator*{\ttt}{tt}
\newcommand{\ttable}{\ttt^{\operatorname{avg}}}
\newcommand{\EmilU}{\mathbin{\dot{\cup}}}

% Shortcuts for fancy letters

\newcommand{\bN}{\mathbb{N}}
\newcommand{\bbN}{\mathbb{N}}
\newcommand{\bF}{\mathbb{F}}

\newcommand{\calC}{\mathcal{C}}

\newcommand{\size}[2]{\mathsf{size}_{#1}(#2)}

\newcommand{\Log}{\operatorname{Log}}
\newcommand{\LogLog}{\operatorname{LogLog}}

\newcommand{\email}[1]{\href{mailto:#1}{\textsf{#1}}}

% === Title page ===
\title{ From Proof Complexity to Circuit Complexity \\via Interactive Protocols}

\author{
    Noel Arteche\footnote{Lund University and University of Copenhagen, \email{noel.arteche@cs.lth.se}}
    \and
    Erfan Khaniki\footnote{Institute of Mathematics of the Czech Academy of Sciences, \email{e.khaniki@gmail.com}}
    \and
    Ján Pich\footnote{University of Oxford, \email{jan.pich@cs.ox.ac.uk}}
    \and
    Rahul Santhanam\footnote{University of Oxford, \email{rahul.santhanam@cs.ox.ac.uk}}
}

\date{}

\begin{document}
\maketitle
\abstract{
Folklore in complexity theory suspects that circuit lower bounds against $\NC^1$ or $\P/\poly$, currently out of reach, are a necessary step towards proving strong proof complexity lower bounds for systems like Frege or Extended Frege. Establishing such a connection formally, however, is already daunting, as it would imply the breakthrough separation $\NEXP \not\subseteq \P/\poly$, as recently observed by \citeauthor{Pseopt} \cite{Pseopt}.

We show such a connection conditionally for the Implicit Extended Frege proof system ($\iEF$) introduced by \citeauthor{Kra04} \cite{Kra04}, capable of formalizing most of contemporary complexity theory. In particular, we show that if $\iEF$ proves efficiently the standard derandomization assumption that a concrete Boolean function is hard on average for subexponential-size circuits, then any superpolynomial lower bound on the length of $\iEF$ proofs implies $\#\P \not\subseteq \FP/\poly$ (which would in turn imply, for example, $\PSPACE \not\subseteq \P/\poly$). Our proof exploits the formalization inside $\iEF$ of the soundness of the sum-check protocol of Lund, Fortnow, Karloff, and Nisan \cite{LFKN92-SumCheck}. This has consequences for the self-provability of circuit upper bounds in $\iEF$. Interestingly, further improving our result seems to require progress in constructing interactive proof systems with more efficient provers.
}

\newpage

\section{Introduction}
\label{sec:intro}

At a high level, both circuit complexity and proof complexity can be thought of as an approach towards the $\P$ versus $\NP$ question. The circuit complexity program, which met with considerable success in the 1980s, tries to prove lower bounds against gradually larger circuit classes, hoping to eventually show $\NP \not\subseteq \P/\poly$. Proof complexity, often identified with the so-called Cook-Reckhow program, intends to show $\NP \neq \coNP$ and, in turn, $\P \neq \NP$, by proving lower bounds against gradually more powerful proof systems for propositional logic.

While both enterprises share the motivation to study \emph{concrete} computational models of increasing power hoping to build up techniques to attack the long-sought separations, there exist notable differences. Circuit complexity looks at deterministic models of computation, while proof complexity deals with proof systems, which are inherently non-deterministic. Furthermore, while circuit complexity has a clear end-goal (lower bounds against general Boolean circuits), it remains wide open whether the Cook-Reckhow program can be realized even in principle. It is not known whether lower bounds against strong systems like Extended Frege can imply lower bounds for every other system and, as such, one could potentially keep proving lower bounds for ever-stronger systems without ever settling whether $\NP \neq \coNP$.

The parallels between circuit complexity and proof complexity are made clearer by Frege systems. For each circuit complexity class $\calC$, one can define the proof system $\calC$-Frege, in which proof lines are restricted to be circuits from $\calC$. In this setting strong systems like Frege and Extended Frege correspond to $\NC^1$-Frege and $\P/\poly$-Frege, respectively, and thus the natural question arises: Can we turn explicit lower bounds for $\calC$ circuits into lower bounds for $\calC$-Frege systems, and vice versa?

While the question is essentially open, work on weaker systems and circuit classes has proven successful. In one direction, the method of feasible interpolation \cite{Kra94-PHP, Razb95, Kra97} (see \cite[§17.9.1]{krajicekBOOK} for the history of the method) has been extensively applied to obtain proof complexity lower bounds. The framework of feasible interpolation formalizes the idea of extracting computational content from proofs: given short proofs in a given system, one can extract a small Boolean circuit in some restricted classes for a related interpolant function. Contrapositively, circuit lower bounds for such functions (often coming from unconditional results such as lower bounds against monotone circuits \cite{Razb85, And85, AB87}), turn into lower bounds for proofs systems like Resolution \cite{Kra97} or Cutting Planes \cite{Pud97} (and conditionally for other systems, such as Polynomial Calculus or Sum-of-Squares \cite{Hak20}). Unfortunately, this connection breaks for stronger proof systems: already $\AC^0$-Frege and $\TC^0$-Frege are known to lack feasible interpolation properties\footnote{Some of these systems are known to admit some form of interpolation by stronger computational models, see e.g. \cite{Pud20, DR23}, but we are interested in Boolean circuits.} under standard cryptographic hardness assumptions \cite{KP98, BPR1997, BDGMP04}, and this holds even if we allow feasible interpolation by quantum circuits \cite{ ACG24}.

In the other direction (circuit complexity from proof complexity), the theory of lifting has unveiled deep connections between proofs, circuits and communication protocols. Here, so-called query-to-communication lifting theorems translate query complexity lower bounds (corresponding to weak systems, like Resolution) into communication complexity lower bounds (e.g.\ \cite{RMK97, LMMPZ22}). The latter provide restricted circuit lower bounds, such as for monotone circuits (see e.g.\ \cite{GGKS18, dRMNPRV20, dRGR22} and references therein). It is, however, not known how to derive non-monotone lower bounds for unrestricted Boolean circuits by lifting proof complexity lower bounds.

For proper Frege systems, the connection has worked mostly in one direction, from circuits to proofs, particularly at the level of techniques. The method of random restrictions and the celebrated switching lemmas used to show constant-depth circuit lower bounds in the 1980s \cite{FSS-AC0, Ajt83, Has86} were successfully transferred into $\AC^0$-Frege lower bounds shortly after \cite{Ajt94-PHP, BPU92-PHP,  Kra94-PHP,  jointPHP, PBI93, KPW95-AC0}. This suggests that understanding what makes proof lines large might be necessary to understand why proofs are long. Intriguingly, understanding the proof lines alone does not seem to suffice: the $\AC^0[p]$ lower bounds of Razborov and Smolensky \cite{Razb87, Smo87} are yet to be successfuly translated to proof complexity, with lower bounds for $\AC^0[2]$-Frege being one of the prominent frontier problems in the field.

The current situation seems to suggest that in order to make progress towards proof complexity lower bounds, it is \emph{necessary} (though seemingly not sufficient) to first obtain strong enough circuit lower bounds. In particular, under this folklore belief, circuit lower bounds against $\NC^1$ or $\P/\poly$, currently out or reach, would be a necessary step towards proving strong proof complexity lower bounds for systems like Frege or Extended Frege. However, the suspicion remains unproven, and no generic way of deriving explicit circuit lower bounds for unrestricted Boolean circuits from proof complexity lower bounds for concrete propositional proof systems has been discovered\footnote{We note that the issue lies in establishing such a connection for a \emph{concrete} system. Of course, the statement \say{there is a proof system $S$ such that if $S$ is not polynomially bounded, then $\P \neq \NP$} is true: if $\NP = \coNP$ the implication is vacuously true by taking a polynomially bounded proof system; if $\NP \neq \coNP$, then $\P \neq \NP$ and thus the statement holds for any proof system. It would be dramatically different to obtain such a connection for a concrete system.}.

The first result giving such a connection under relatively conventional assumptions which are presumably weaker than the conclusion of the connection itself was presented recently by \citeauthor{Pseopt} \cite{Pseopt}. Specifically, they showed that any superpolynomial lower bound on the length of tautologies in the Extended Frege system $\EF$ implies $\NP\not\subseteq\Ppoly$ assuming hypotheses (I) and (II) below:

\begin{itemize}
\item[\textsc{I}.] (Provable circuit lower bound.) $\EF$ proves efficiently that a concrete Boolean function in $\E$ is average-case hard for subexponential-size circuits.

\item[\textsc{II}.] (Provable reduction of OWFs to $\P\neq\NP$.) $\EF$ proves efficiently that a polynomial-time function transforms circuits breaking one-way functions into circuits solving $\SAT$.
\end{itemize}

We remark that Hypothesis I above presupposes $\E\not\subseteq\Ppoly$, which is however believed to be a significantly weaker statement than $\NP\not\subseteq\Ppoly$. Alternatively, Hypotheses I and II can be replaced by a single assumption on the feasible provability of the existence of anticheckers in $\EF$. These results remain valid even if we replace $\EF$ by an essentially arbitrary proof system simulating $\EF$.

Crucially, improving this and related results by dropping the hypotheses is surprisingly daunting. As noted by \citeauthor{Pseopt} \cite[Prop. 1]{Pseopt}, if one unconditionally establishes the implication \say{if $S$ is not polynomially bounded, then $\NP \not\subseteq \Ppoly$} for a concrete proof system $S$, then the breakthrough separation $\NP \not\subseteq \SIZE[n^k]$, for every fixed $k$ (and $\NEXP\not\subseteq\P/\poly$) follows!

In short, proving a formal connection between proof complexity and circuit complexity provably requires breakthrough circuit lower bounds! Despite this setback, one can still hope to get evidence that points at these connections, possibly by shifting some of the components of the ingredients. Namely, one may try to (a) adopt some hardness assumption, in the style of \cite{Pseopt}; (b) conclude lower bounds weaker than $\NP \not\subseteq \Ppoly$; or (c) look at non-Cook-Reckhow proof systems (such as MA proof systems or proof systems for languages beyond $\coNP$).

In this style, \citeauthor{GP18} \cite{GP18} showed that the Ideal Proof System ($\IPS$) does satisfy such a connection, to \emph{algebraic} circuit complexity. Indeed, any superpolynomial lower bound in the length of proofs in $\IPS_{\mathbb{F}}$ implies $\VP_{\mathbb{F}} \neq \VNP_{\mathbb{F}}$. \citeauthor{GP18} avoid the Pich-Santhanam barrier by means of (b) and (c) above: first, $\IPS$ is not known to be a Cook-Reckhow system, since proofs are verified by randomized machines via polynomial identity testing; second, the lower bounds are algebraic and not Boolean. Recall that while separating $\VP$ and $\VNP$ is a necessary step\footnote{Unconditionally over finite fields, and assuming the Generalized Riemann Hypothesis for infinite fields.} towards $\NP \not\subseteq \Ppoly$ \cite{Bur00}, the converse is not known.

Another interesting connection has been established in the realm of quantified Boolean formulas, where the connection can be made essentially optimal. Beyersdorff, Bonacina, Chew, and Pich \cite{BBCP20-QBF} showed that for every circuit class $\calC$, the quantified system $\calC\text{-Frege} + \forall\mathsf{red}$ is not polynomially bounded if and only if either $\PSPACE\not\subseteq \calC$ or $\calC$-Frege is not polynomially bounded. Here, $\calC\text{-Frege} + \forall\mathsf{red}$ stands for the natural quantified system obtained by extending $\calC$-Frege with a universal reduction rule, which takes care of universal quantifiers by instantiating concrete values for its variables in the hope of refuting the formula. The reason this avoids the Pich-Santhanam barrier is the disjunct in the conclusion. That is, in the context of QBF the statement of the Pich-Santhanam barrier becomes that if $\calC\text{-Frege} + \forall\mathsf{red}$ is not polynomially bounded implies $\PSPACE\not\subseteq \calC$ or $\calC$-Frege is not polynomially bounded, then it already holds that either $\NEXP\not\subseteq\Ppoly$ or $\calC$-Frege is not polynomially bounded. But this disjunction is no breakthrough, since it follows directly by a diagonalization argument anyway: if a propositional system is polynomially bounded, then $\NEXP$ is hard for $\Ppoly$ \cite{Kra04-Diag}.

\subsection*{Contributions}
We prove a new conditional connection between proof complexity and circuit complexity, giving further evidence that strong proof complexity lower bounds require circuit lower bounds. This constitutes the first example of a natural proof system that is conditionally Cook-Reckhow and whose lower bounds imply Boolean circuit lower bounds.

The system in question is (an extension of) the Implicit Extended Frege ($\iEF$) proof system of \citeauthor{Kra04} \cite{Kra04}, capable of formalizing most of contemporary complexity theory. Our result can be informally stated as follows, where 
$\iEF^{\operatorname{tt}(h)}$ stands for the proof system extending $\iEF$ by axioms $\ttable_{1/4}(h_n,2^{n/4})$ claiming there are no circuits of size $2^{n/4}$ approximating a concrete function $h$ on more than a $(1/2 + 1/{2^{n/4}})$-fraction of the inputs.\footnote{For technical reasons, we define $\iEF^{\operatorname{tt}(h)}$ using a system which is polynomially equivalent to $\iEF$ instead of $\iEF$ itself, see \Cref{def:iEF-tt}.}

\begin{theorem}
[Main theorem, informal]  \label{thm:main}
  Suppose there exists a Boolean function $h \in \NE \cap \coNE$ that is hard on average for subexponential-size circuits. If the Cook-Reckhow proof system $\iEF^{\operatorname{tt}(h)}$ is not polynomially bounded, then $\#\P \not\subseteq \FP/\poly$.
\end{theorem}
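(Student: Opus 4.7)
The plan is to prove the contrapositive: assuming $\#\P \subseteq \FP/\poly$ together with the average-case hardness of $h$, I will construct polynomial-size $\iEF^{\operatorname{tt}(h)}$ proofs for every propositional tautology $\tau$. The starting point is the LFKN sum-check protocol. After arithmetizing $\neg\tau$ into a low-degree polynomial $P$ over a prime field of size $\text{poly}(|\tau|)$, tautologyhood of $\tau$ becomes equivalent to $\sum_{x \in \{0,1\}^n} P(x) = 0$, which admits an interactive proof with a polynomial-time verifier $V$ and a prover whose messages lie in $\#\P$. Under $\#\P \subseteq \FP/\poly$ the prover is realised by a polynomial-size family of circuits $C_\tau$ (with polynomial advice); this circuit is the non-uniform witness for $\tau$.

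I would then derandomize the verifier. From the hardness axiom $\ttable_{1/4}(h_n, 2^{n/4})$ I would build a Nisan--Wigderson pseudorandom generator $G_h$ of logarithmic seed length that fools all polynomial-size Boolean circuits, including the composition of the verifier with $C_\tau$. Then $\tau \in \TAUT$ becomes equivalent to the deterministic conjunction that $V$ accepts on $G_h(s)$ for every seed $s$, a propositional formula of size polynomial in $|\tau|$.

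The core of the argument is formalizing this picture inside $\iEF^{\operatorname{tt}(h)}$. I would proceed in three stages. First, formalize the soundness of sum-check within $\iEF$: since $\iEF$ reasons about polynomial-size circuits, the key algebraic ingredient that a univariate polynomial of degree $d$ has at most $d$ roots over a prime field must be propositionalized and propagated along the polynomial-length transcript. Second, internalise the NW construction and prove its security by formalising the distinguisher-to-approximator reduction: any polynomial-size distinguisher against $G_h$ yields a small circuit approximating $h$ on a $(1/2 + 1/2^{n/4})$-fraction of inputs, directly contradicting the $\operatorname{tt}$-axiom. Third, combine the two: introduce $C_\tau$ and $G_h$ via extension variables, verify the deterministic acceptance condition, and conclude $\tau$ using the derandomized soundness of sum-check.

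The principal obstacle will be the first stage --- formalizing sum-check soundness inside $\iEF$ --- because the standard proof relies on algebraic reasoning over large prime fields and on careful handling of a polynomial-length interactive transcript, all of which must be encoded propositionally without blowing up beyond polynomial size. A secondary technical hurdle is step two, where the NW parameters need to be matched to the precise shape of $\ttable_{1/4}(h_n, 2^{n/4})$; and a subtler point is the use of $h \in \NE \cap \coNE$ to supply succinct certificates for the individual values of $h$ that the PRG actually evaluates, which the formal proof can incorporate as extra witnesses. Once these are settled, the final composition becomes essentially mechanical.
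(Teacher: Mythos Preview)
Your plan is essentially the paper's approach: turn the LFKN sum-check into a Merlin--Arthur system for $\TAUT$, observe it is polynomially bounded under $\#\P\subseteq\FP/\poly$, derandomize using a Nisan--Wigderson generator built from the hard function $h$, and formalize the whole picture in $\iEF^{\operatorname{tt}(h)}$. The paper executes this a bit differently in two respects. First, it works entirely in bounded arithmetic: the sum-check soundness is taken as a black box from Khaniki's formalization in $\SOTEXP$, and the probability reasoning is handled by Je\v{r}\'abek's approximate-counting machinery (which is exactly the NW distinguisher-to-approximator reduction you describe, packaged as the $\setsize$ function in ${\sf sHARD}^{\text{A}}$); only at the end is everything translated propositionally. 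Second, the paper factors the argument through an explicit intermediate MA system $\SC$ and proves $\iEF^{\operatorname{tt}(h)}\geq\SC$ \emph{independently} of the $\#\P$ assumption, which is cleaner than building the $\iEF^{\operatorname{tt}(h)}$ proof of $\tau$ directly from $C_\tau$. One technical wrinkle you do not mention: the soundness statement is $\forall\Sigma^b_1$ (because of the existential ``there is a small approximator for $h$''), so the translation lands in $\Sigma^q_1$ and the paper routes through a quantified system $\GE$ equivalent to $\iEF$, rather than $\iEF$ itself.

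One point in your write-up to sharpen: the phrase ``$\iEF$ reasons about polynomial-size circuits'' undersells why $\iEF$ (rather than $\EF$) is needed. The standard soundness proof of sum-check refers throughout to the \emph{true} partial sums $Q_i$, which are $\#\P$ quantities---exponential sums over $\{0,1\}^{n-i}$. It is precisely the implicit exponential-length proofs of $\iEF$ (equivalently, the $\OEXP$ axiom in $\SOTEXP$) that let one define and induct on these values; this is the real content of your ``principal obstacle'', and it is the reason the paper cannot push the result down to $\EF$ without further ideas. Finally, the role of $h\in\NE\cap\coNE$ is slightly different from what you sketch: the $\operatorname{tt}$ axioms already carry the full truth table of $h_n$ as constants, so individual PRG evaluations need no separate certificates; the $\NE\cap\coNE$ hypothesis is what lets the proof-system \emph{verifier} recognise a legitimate $\operatorname{tt}$ axiom in polynomial time, i.e.\ what makes $\iEF^{\operatorname{tt}(h)}$ Cook--Reckhow.
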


In the theorem above one could instead consider the system $\iEF^{\operatorname{tt}(h)}$ for some unconditionally hard function family $h$ that is guaranteed to exist. The only problem in this case is that we might need non-uniform advice to verify the proofs, and so the system would not be Cook-Reckhow (we refer to \citeauthor{CK07} \cite{CK07} for a systematic treatment of non-uniform proof systems).

One can interpret our theorem as improving on the connection of \citeauthor{Pseopt} \cite{Pseopt} from proof complexity to circuit complexity. Our result improves that of \citeauthor{Pseopt} by completely dropping their second assumption (the one about $\EF$ proving the existence of one-way functions under $\P \neq \NP$). The price to pay for these changes is two-fold: 
\begin{enumerate}
\item we need to replace $\EF$ by the seemingly stronger Implicit Extended Frege system ($\iEF$). Informally, $\iEF$ extends $\EF$ with an extra rule allowing us to derive a formula $\varphi$ after we have derived that a truth table of a given circuit encodes an $\EF$-proof of $\varphi$. Such a circuit is called an \emph{implicit} proof;

\item we can conclude only $\#\P\not\subseteq \FP/\poly$ from $\iEF$ lower bounds, instead of $\NP \not\subseteq \Ppoly$.
\end{enumerate}

One may also compare our result to that of \citeauthor{GP18} \cite{GP18}, who showed $\VP \neq \VNP$ (and hence hardness of computing the permanent) would follow from $\IPS$ lower bounds. Like our result, the $\IPS$ proof system is only conditionally Cook-Reckhow. Indeed, $\IPS$ is a Merlin-Arthur proof system which can be derandomized\footnote{In fact, derandomizing $\IPS$ at all by simulating it by a Cook-Reckhow system implies a non-trivial derandomization of polynomial identity testing to $\NP$ \cite{grochow2023polynomial}; this, in turn, implies some circuit lower bounds, as shown by Kabanets and Impagliazzo \cite{kabanets2004derandomizing}.} under standard assumptions, like $\E$ being hard to approximate by subexponential-size circuits. 
Our result is in some sense stronger in that the lower bounds obtained are Boolean rather than algebraic. However, we seem to be getting to lower bounds for the same problem as Grochow and Pitassi, since computing the permanent is both $\VNP$-complete and $\#\P$-complete.

We note that the requirement that $h\in\NE\cap\coNE$ is not strictly needed and, in fact, one can phrase the result in a more general style (as we do in the technical part) in which the connection holds for any extension of $\iEF$ by truth table formulas for any hard function. Observe, however, that $\iEF$ is a very strong proof system, conjectured to be strictly stronger than the standard $\EF$ and capable of formalizing most of computational complexity theory, with its bounded arithmetic counterpart being the theory $\mathsf{V^1_2}$ (or $\SOTEXP$, in the first-order setting), and so it is plausible that $\iEF$ already proves such a circuit lower bound. Indeed, the existing formalizations of complexity-theoretic statements support the assumption that $\iEF$ is able to prove efficiently practically everything we can prove in complexity theory today (or, more precisely, every $\coNP$ statement of that kind). For example, already $\EF$ can prove efficiently the PCP theorem \cite{pich2015logical}, $\AC^0$, $\AC^0[2]$ and monotone circuit lower bounds \cite{razborov1995bounded, muller2020feasibly}, or the hardness amplification producing average-case hard functions in $\E$ from worst-case hard functions in $\E$ \cite{Jer05-PhD}. Furthermore, $\iEF$ proves efficiently the correctness of Zhuk's algorithm from a CSP dichotomy \cite{Gcsp1,Gcsp2}. Hence it is plausible to imagine that if circuit lower bounds are at all provable, they may well be provable already in $\iEF$. If that turned out to be the case, then the concrete proof system in our main theorem becomes $\iEF$ itself.

\begin{corollary}(Main theorem, restated)\label{cor:main} Assume that $\iEF$ proves efficiently $\ttable_{1/4}(h_n,2^{n/4})$ for some function family $h$ and each sufficiently big $n$. Then, if $\iEF$ is not polynomially bounded, $\#\P \not\subseteq\FP/\poly$.
\end{corollary}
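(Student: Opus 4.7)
The plan is to derive this corollary almost immediately from \Cref{thm:main} by arguing that under the stated hypothesis, the proof system $\iEF^{\operatorname{tt}(h)}$ collapses (up to polynomial equivalence) to $\iEF$ itself, so that a superpolynomial lower bound on $\iEF$ automatically transfers to $\iEF^{\operatorname{tt}(h)}$, at which point the conclusion of the main theorem applies. The only subtlety is aligning the hypothesis with the exact form of $\iEF^{\operatorname{tt}(h)}$ and with the hardness assumption built into \Cref{thm:main}.

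Concretely, recall that $\iEF^{\operatorname{tt}(h)}$ is defined (for technical reasons mentioned in the excerpt) as a system polynomially equivalent to $\iEF$, extended by the schema of truth-table axioms $\ttable_{1/4}(h_n,2^{n/4})$. Given an $\iEF^{\operatorname{tt}(h)}$-proof of size $s$, each instance of a $\ttable_{1/4}(h_n,2^{n/4})$-axiom can be replaced by its $\iEF$-derivation, which by hypothesis has size polynomial in $n$, producing an $\iEF$-proof of size $\poly(s)$. Conversely, $\iEF \subseteq \iEF^{\operatorname{tt}(h)}$ trivially. Hence $\iEF$ and $\iEF^{\operatorname{tt}(h)}$ are polynomially equivalent as proof systems, and in particular $\iEF$ is polynomially bounded if and only if $\iEF^{\operatorname{tt}(h)}$ is.

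To invoke \Cref{thm:main}, we still need the average-case hardness of $h$ for subexponential circuits. But this is automatic: $\iEF$ is sound, and the hypothesis that $\iEF$ has short proofs of $\ttable_{1/4}(h_n,2^{n/4})$ for all sufficiently large $n$ implies that these tautologies are in fact true, i.e.\ $h$ is genuinely hard on average for circuits of size $2^{n/4}$. The $\NE \cap \coNE$ restriction in \Cref{thm:main} is only there to ensure that $\iEF^{\operatorname{tt}(h)}$ is a Cook-Reckhow system (the axiom scheme must be polynomial-time verifiable); as noted in the discussion preceding the corollary, the technical version of our result removes this restriction and applies to an arbitrary hard family $h$. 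Applying that generalized form to our $h$, a superpolynomial lower bound on $\iEF^{\operatorname{tt}(h)}$ yields $\#\P \not\subseteq \FP/\poly$, and composing with the equivalence above gives the corollary.

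The only point where care is needed is verifying that the reduction of $\iEF^{\operatorname{tt}(h)}$ to $\iEF$ truly respects the polynomial-size blow-up uniformly across all formulas, which in turn depends on the hypothesis holding for \emph{each} sufficiently large $n$ rather than infinitely often; the corollary's phrasing makes this explicit. Beyond that, the argument is essentially a substitution of short derivations for axiom instances, and the real content of the result sits entirely in \Cref{thm:main}.
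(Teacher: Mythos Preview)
Your proposal is correct and matches the paper's own argument: under the hypothesis the extra axioms of $\iEF^{\operatorname{tt}(h,n_0)}$ already have short $\iEF$-proofs, so the two systems are polynomially equivalent, and then the technical form of the main theorem (\Cref{thm:main-tech}) applies. The paper compresses exactly this into one line; your additional remarks about soundness of $\iEF$ forcing genuine hardness of $h$ and about the $\NE\cap\coNE$ restriction being dispensable in the technical version are accurate but, strictly speaking, not needed, since \Cref{thm:main-tech} imposes no hardness or uniformity condition on $h$ whatsoever.
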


Let us note that one cannot make big improvements to this result without hitting the Pich-Santhanam barrier that implies $\NEXP \not\subseteq \Ppoly$ unconditionally: if we managed to prove Theorem \ref{thm:main} for a Cook-Reckhow proof system, then $\NEXP\not\subseteq\Ppoly$ would follow unconditionally. On the other hand, if our final goal is to prove $\FP\neq\#\P$, then the assumption of Theorem \ref{thm:main} is given to us for free even for some hard $h\in \E$, as otherwise, if $\E$ can be computed by subexponential-size circuits, it is not hard to show that $\P \neq \NP$ \cite{Kra04-Diag}. 

\subsection*{Consequences for self-provability of circuit upper bounds} Our result has consequences for the self-provability of circuit upper bounds. Suppose that $ \#\P\subseteq\FP/\poly$. Then, there is a sequence of polynomial-size circuits $\{C_n\}_{n\in\bbN}$ that on input a formula $\varphi$ of size $n$, outputs a satisfying assignment if one exists. This means that the propositional formula $\SAT_n(\varphi, \alpha)\rightarrow \SAT_n(\varphi,C_n(\varphi))$ claiming the correctness of $C_n$ as a SAT solver is tautological (where $\SAT_n$ is the satisfiability predicate, taking a formula $\varphi$ and an assignment $\alpha$ and evaluating the formula). But by Theorem \ref{thm:main}, $\iEF^{\operatorname{tt}}$ is now polynomially bounded, and so the proof system is able to efficiently argue for the correctness of the circuits. Namely, the mere validity of the upper bound $\#\P\subseteq\FP/\poly$ would imply the efficient propositional provability of $\SAT\in\Ppoly$.

\subsection*{Outline of the proof}
Our main result follows from a derandomization of the known fact that $\coNP \not\subseteq\MA$ implies $\#\P\not\subseteq\FP/\poly$ (see, for example, \cite[Thm. 8.22]{AB09}), together with a formalization of the underlying MA system in a suitable theory of bounded arithmetic. The implication holds, actually, for the MA system given by the sum-check protocol of Lund, Fortnow, Karloff, and Nisan \cite{LFKN92-SumCheck} in which proofs consist of circuit simulating the moves of the Prover in the protocol, so that given such a circuit, the Verifier can simulate the entire protocol on their own with the aid of randomness. If $\#\P\subseteq\FP/\poly$, then the $\#\P$-powerful Prover in the sum-check protocol can be replaced by a polynomial-size circuit and thus the system is a polynomially bounded Merlin-Arthur system. Clearly, lower bounds on the length of proofs in this system are exactly circuit lower bounds against $\#\P$.

Since $\MA$ can be derandomized under standard hardness assumptions, assuming, for example, that $\E$ is hard for subexponential-size circuits, the proof system $R$ based on the sum-check protocol above becomes a Cook-Reckhow system such that if $R$ is not polynomially bounded, then $\#\P\not\subseteq\FP/\poly$. This is almost our goal. Our task now is to replace this system by a different more standard Cook-Reckhow system $S$. This can be achieved by proving efficiently the reflection principle of the system $R$ in $S$, which essentially amounts to proving the soundness of the sum-check protocol in $S$. Here, we employ a recent work of \citeauthor{Kha23} \cite{Kha23}, in which the soundness of the sum-check protocol was formalized in $\SOTEXP$. 

In order to translate the formalization inside $\SOTEXP$ into propositional logic, we need to express the soundness of the sum-check protocol by propositional formulas. This is achieved using the machinery of approximate counting of \citeauthor{Jer07} \cite{Jer07}, which exploits Nisan-Wigderson generators based on a hard Boolean function.

\subsection*{Open problems}
Improving our result seems to require significant conceptual work. Of course, simultaneously dropping the circuit lower bound assumption as well as getting the stronger separation $\NP \not\subseteq \Ppoly$ would already imply $\NEXP \not\subseteq \Ppoly$, but one may hope to improve the existing connection by improving on one of the two fronts only. Interestingly, this seems to require progress in some of the central open questions in the theory of interactive proof systems or in hardness magnification.

\paragraph{The power of the prover.} Is it possible to strengthen the conclusion of the main theorem all the way down to $\NP\not\subseteq\Ppoly$? This would follow, for example, if we managed to design an interactive protocol for $\TAUT$ with a prover solving only $\NP$ problems and prove its correctness in $\iEF$ (unlike the current situation, where the prover is required to compute a $\#\P$-complete function). The general question of constructing a protocol for a language $L$ where the prover's power is limited to $\P^L$ is a well-known open problem in the theory of interactive proof systems (see, for example, \cite[§8.4]{AB09}).

Note, of course, that the existence of such a protocol does not suffice, since its soundness must be provable inside $\iEF$. In fact, the reason why we require $\iEF$ (or $\SOTEXP$) to carry out the formalization of the existing sum-check protocol is that one cannot feasibly talk about $\#\SAT$ directly in $\EF$ or $\SOT$  (unless $\FP = \#\P$).

\paragraph{Hardness magnification.} Is it possible to replace $\iEF$ in the main theorem by Gentzen's system $\Gent$, or even by Extended Frege? One option would be to carry out the existing formalization inside $\EF$, as mentioned above. The caveat would be, however, that we would then have to make the assumption on truth table tautologies for $\EF$. Whether $\EF$ can prove general circuit lower bounds at all seems much less believable than for $\iEF$, and so the plausibility of our hypotheses seems affected.

Instead, one may choose to keep everything in $\iEF$ and obtain the connection indirectly for $\EF$ via hardness magnification. Is there a natural class of formulas over which $\EF$ simulates $\iEF$ (and which are believably hard for $\EF$)? If so, assuming hardness of these formulas for $\EF$ would imply $\iEF$ lower bounds. By our main theorem, $\#\P \not\subseteq \Ppoly$ would follow. To the best of our knowledge, no such type of hardness magnification is known for strong proof systems.

\section{Preliminaries}
\label{sec:preliminaries}
We assume familiarity with the central concepts of computational complexity theory, propositional proof complexity and mathematical logic. Some of our work relies on formalizing standard text-book material on computational complexity in different theories of arithmetic; for the standard proofs of these results, we refer the reader to \citeauthor{AB09} \cite{AB09}. Below we review the central concepts of proof complexity and bounded theories of arithmetic and fix some notation.

\subsection{Proof complexity}
\label{subsec:proof-complexity}
Following Cook and Reckhow \cite{cookReckhow}, a \emph{propositional proof system} $S$ for the language $\TAUT$ of propositional tautologies is a polynomial-time surjective function $S : \{0,1\}^* \to \TAUT$. We shall think of $S$ as a proof checker taking as input a proof $\pi\in \{0 ,1\}^*$ and outputting $S(\pi) = \varphi$, the theorem that $\pi$ proves. Note that soundness follows from the fact that the range is exactly $\TAUT$, and implicational completeness is guaranteed by the fact that $S$ is surjective. We sometimes drop the term \emph{proof} in \emph{proof system} and use the term \emph{system} alone to refer to a function $S$ that is not guaranteed to be a Cook-Reckhow proof system (perhaps because it is unsound, or not deterministically computable).

We denote by $\size{S}{\varphi}$ the \emph{size} of the smallest $S$-proof of $\varphi$ plus the size of $\varphi$. A proof system $S$ is \emph{polynomially-bounded} if for every $\varphi \in \TAUT$, $\size{S}{\varphi} \leq |\varphi|^{O(1)}$. We say that a proof system $S$ \emph{polynomially simulates} a system $Q$, written $S \geq Q$, if for every $\varphi \in \TAUT$, $\size{S}{\varphi} \leq \size{Q}{\varphi}^{O(1)}$. Note that the notion of size and the definition of simulation do not exploit the soundness requirement of Cook-Reckhow systems. These notions are well-defined for any function whose range contains $\TAUT$. In particular, an unsound system can be polynomially bounded and simulate every other system. In some cases simulations hold only for some set $T$ of tautologies, such as the set of tautologies written as 3DNFs, and not for all formulas, and then we say that $S$ polynomially simulates $Q$ over $T$. Given a family $\{ \varphi_n\}_{n\in\mathbb{N}}$ of propositional tautologies, we write $S \vdash \varphi_n$ whenever $\size{S}{\varphi_n} \leq |\varphi_n|^{O(1)}$.

\subsubsection{Frege systems}
\label{subsec:proof-complexity-ref}
Proof complexity studies a wide variety of proof systems. The most important ones for us are \emph{Frege systems}. A Frege system is a finite set of axiom schemas and inference rules that are sound and implicationally complete for the language of propositional tautologies built from the Boolean connectives negation ($\neg$), conjunction ($\land$), and disjunction ($\lor$). A Frege proof is a sequence of formulas where each formula is obtained by either substitution of an axiom schema or by application of an inference rule on previously derived formulas. The specific choice of rules does not affect proof size up to polynomial factors, as long as there are only finitely many rules and these are sound and implicationally complete. Indeed, Frege systems polynomially simulate each other \autocite[Thm. 4.4.13]{krajicekBOOK}. Alternatively, one may choose to think of Frege systems as some variant of Natural Deduction or the Sequent Calculus for classical propositional logic.

Particularly important for us is the Extended Frege ($\EF$) system, in which proof lines can be Boolean circuits and not just formulas, which would allow in principle for more succinct proofs. We shall often consider extensions of Extended Frege by sets of additional axioms. For a set $A \subseteq \TAUT$ of tautologies recognizable in polynomial time, the system $\EF + A$ refers to Extended Frege extended with substitution instances of any formula in $A$. Note that if $A$ were to contain contingent formulas, then $\EF + A$ would not be sound; in particular, it would not be a Cook-Reckhow system, though it would be polynomially bounded.

A useful property of $\EF$ is the fact that, for every propositional system $S$, $\EF + \operatorname{Ref}_S \geq S$ \cite{KP90}. Here $\operatorname{Ref}_S$ is the sequence of tautologies encoding the \emph{reflection principle for $S$}, which states that $S$ is sound. Namely, $\operatorname{Ref}_S \coloneqq \{\operatorname{Ref}_{S, n, m}\}_{n, m \in \mathbb{N}}$ where
$\operatorname{Ref}_{S, {n, m}} \coloneqq \operatorname{Prf}_{S, n, m}(\pi, \varphi) \to \operatorname{Sat}_{n, m}(\varphi, \alpha)$,
and $\varphi$ is a formula of size $n$, $\pi$ is a purported $S$-proof of size $m$ and $\alpha$ is an assignment to the variables in $\varphi$, which are all encoded by free variables. The formula $\operatorname{Prf}_{S, n, m}$ encodes that $\pi$ is a correct $S$-proof of $\varphi$, and $\operatorname{Sat}_{n, m}(\varphi, \alpha)$ encodes the standard satisfaction relation for propositional formulas. Alternatively, one may exploit the same relation with respect to the \emph{consistency} of $S$, $\operatorname{Con}_S \coloneqq \{\operatorname{Con}_{S, m}\}_{m \in \mathbb{N}}$, where $\operatorname{Con}_{S, m} \coloneqq \neg \operatorname{Prf}_{S,1, m}(\pi, \bot)$ and $\pi$ encodes a purported proof of size $m$.

\subsubsection{Quantified propositional systems}
The focus of proof complexity is on proof systems for propositional tautologies, but it is often convenient to operate on systems capable of reasoning with \emph{quantified} Boolean formulas, where the quantification ranges over $\{0, 1\}$. We denote by $\Sigma^q_i$ (respectively, $\Pi^q_i$) the class of quantified Boolean formulas with $i$ alternations between existential and universal quantifiers, starting with an existential (respectively, universal) one. In this context, the true formulas in $\Pi_1^q$ correspond to the usual propositional tautologies.

We are particularly interested in Gentzen's Sequent Calculus for quantified propositional logic. The system extends the usual propositional Sequent Calculus by four new rules to handle quantifiers (see \cite[Def. 4.1.2]{krajicekBOOK} for a formal definition of the rules). We denote this system by $\Gent$, and by $\GentS$ its tree-like counterpart. The system $\Gent_i$, for $i\in\bbN$, corresponds to $\Gent$ where the quantified formulas appearing in the sequents can only be in the class $\Sigma_i^q \cup \Pi_i^q$. The tree-like counterpart of $\Gent_i$ is naturally denoted $\mathsf{G}^*_i$. It is useful to know that $\EF$ and $\GOS$ are polynomially equivalent with respect to $\Pi^q_1$ formulas \cite[Thm. 4.1.3]{krajicekBOOK}.

\subsubsection{Implicit proof systems}
\emph{Implicit proof systems} constitute a systematic way of obtaining, for every proof system $S$, a potentially stronger system $S'$, and were introduced by \citeauthor{Kra04} \cite{Kra04}. The essential idea is to encode a given proof in the system $S$ as a multi-output Boolean circuit taking as input a number $i$ in binary and outputting the $i$-th step of the proof. More formally, given propositional proof systems $S$ and $Q$, a proof of a tautology $\varphi$ in the \emph{implicit system} $[S, Q]$ is a pair $(\pi, C)$ consisting of a proof and a circuit, such that the truth table of $C$ encodes a valid $Q$-proof of $\varphi$ (the \emph{implicit} proof), while $\pi$ is an \emph{explicit} $S$-proof of the formula $\operatorname{Correct}_Q(\varphi, C)$, which is %the 3CNF encoding of 
the formula stating that the truth table of $C$ is a correct $Q$-proof of $\varphi$. If $S$ and $Q$ are Cook-Reckhow proof systems, then so is $[S, Q]$.

For a system $S$, the implicit system $[S, S]$ is denoted by $\imp S$. In particular, we shall work with the \emph{Implicit Extended Frege} proof system, $\iEF \coloneqq [\EF, \EF]$. The system $\iEF$ is particularly strong, and it can in fact simulate all of $\mathsf{G}$ with respect to propositional tautologies \cite[Cor. 2.4]{Kra04}.

\subsection{Bounded arithmetic}

Our proofs extensively exploit the connections between propositional proof complexity and theories of bounded arithmetic. Below we cover the essential preliminaries needed in our formalizations, which should be accessible to any reader with basic knowledge of first-order logic. 

\subsubsection{The theories $\SOT$ and $\SOTEXP$}
Theories of bounded arithmetic capture various levels of feasible reasoning and act as a uniform counterpart of propositional systems. Intuitively, feasibility is achieved by restricting the complexity of formulas over which one can apply general reasoning schemes like induction.

The central theory for us is Buss's $\SOT$, which we think of as corresponding to polynomial-time reasoning. In this context, we work over the first-order language of bounded arithmetic, $\mathcal{L}_{\text{BA}} \coloneqq \{ 0, S, +, \cdot, <, |x|, \lfloor x/2 \rfloor, x \# y\}$, which extends the language of Peano Arithmetic by the symbols $|x|$, $\lfloor x/2 \rfloor$ and $x \# y$. The standard interpretation of $\lfloor x/2 \rfloor$ is clear. The notation $|x|$ denotes the length of the binary encoding of the number $x$, $\lceil \log (x + 1) \rceil$, while the \emph{smash symbol} $x\#y$ stands for $2^{|x|\cdot|y|}$.

The definition of \emph{bounded formulas}, is analogous to the bounded quantification one encounters in the Polynomial Hierarchy. For a quantifier $Q \in \{ \exists, \forall\}$ and a term $t$ in the language of bounded arithmetic, a formula of the form $Qx< t. \varphi(x)$ stands for either $\forall x. (x < t \to \varphi(x))$ or $\exists x. (x < t \land \varphi(x))$. These are called \emph{bounded quantifiers}. Whenever the bounded quantifier is of the form $Q < |s|$ for some term $s$, we talk about \emph{sharply bounded quantifiers}. The hierarchy of \emph{bounded formulas} consists of the classes $\Sigma_n^b$ and $\Pi_n^b$, for $n \geq 1$, which are defined by counting the alternations of bounded quantifiers ignoring the sharply bounded ones, starting with an existential (respectively, universal) one. The class $\Delta_n^b$ consists of all formulas that admit an equivalent definition in both $\Sigma_n^b$ and $\Pi_n^b$. In particular, the class $\Delta_0^b$ stands for all formulas with sharply bounded quantifiers only.

The theory $\SOT$ of \citeauthor{Buss85} \cite{Buss85} extends Robinson's arithmetic $\mathsf{Q}$ by some basic axioms for the new function symbols and the polynomial induction scheme (\textsc{PInd}) for $\Sigma_1^b$-formulas: for every $\varphi \in \Sigma_1^b$, the theory contains the axiom
\begin{equation}
    \varphi(0) \land \forall x (\varphi (\lfloor x/2 \rfloor) \to \varphi(x)) \to \forall x \varphi(x).
    \tag{\textsc{PInd}}
\end{equation}

An alternative system intended to capture polynomial-time reasoning is Cook's equational theory $\PV$ \cite{Cook75}. In the formalism of $\PV$ one has some basic function symbols and introduces new ones recursively by composition and limited recursion on notation, in the style of Cobham's functional definition of $\FP$ \cite{Cob64}. In this way, the function symbols obtained in $\PV$ are precisely those of all polynomial-time functions over the naturals. The first-order version of $\PV$ is $\PVO$ \cite{KPT91, Buss95, Cook96}. Without loss of generality, we shall work in the theory $\SOT(\PV)$, which is the theory $\SOT$ in the language of bounded arithmetic extended by all $\PV$ function symbols, meaning that we have a fresh symbol for each function in $\FP$, and induction is now available for all $\Sigma_1^b(\PV)$ formulas. We abuse notation and refer to this directly as $\SOT$.

While $\SOT$ is able to formalize a significant amount of complexity theory and some mathematics, it suffers from the drawback of being unable to even state the existence of exponentially large objects. For certain more elaborate arguments we shall work instead inside $\SOTEXP$, which patches this issue. We follow here the definition of \citeauthor{Kra04} \cite[Cor. 2.2]{Kra04}: we write $\SOTEXP \vdash \forall x \varphi(x)$ for some arithmetic formula $\varphi$ if there exists a term $t$ such that
$$\SOT \vdash \forall x \forall y (t(x) \leq |y| \to \varphi(x)).$$
The definition is somewhat indirect and may be hard to grasp at first glance. Intuitively, it allows one to derive properties about $x$ under the assumption that $y = 2^x$ exists.

The theory $\SOT$ corresponds to polynomial-time computations in the sense that the provably total relations in $\SOT$ are precisely the polynomial-time-computable ones. The same relation holds for $\SOTEXP$ and the complexity class $\EXP$.

\subsubsection{Approximate counting}
\label{subsec:apc}
Many of the formalizations carried out in bounded arithmetic require the ability to count. In some cases, small sets can be counted \emph{exactly}, but one often requires more sophisticated machinery for \emph{approximate counting}, needed to formalize many probabilistic arguments.

For $a\in \bN$, a \emph{bounded definable set} is a set of naturals $X = \{ x< a \mid \varphi(x) \} \subseteq [0, a)$, where $\varphi \in \Sigma_{\infty}^b$ is some arithmetic formula. For $X\subseteq a$ and $Y\subseteq b$, we define $X\times Y \coloneqq \{bx+y\mid x\in X, y\in Y\}\subseteq ab$ and $X\EmilU Y \coloneqq X\cup\{y+a\mid y\in Y\}\subseteq a+b$. Rational numbers are assumed to be represented by pairs of integers in the natural way. We also use the unfortunate but standard \emph{Log-notation} widespread in bounded arithmetic, by which $n\in \Log$ stands for the formula $\exists x (n = |x|)$ and $n \in \LogLog$ stands for $\exists x (n = ||x||)$.

Intuitively, from the point of view of the theory, numbers in $\Log$ are \say{small} numbers. For a circuit $C : 2^k \to 2$, where we adopt the set-theoretic custom of identifying $\{0,1\}$ with the number $2$, we can consider the bounded definable set $X_C \coloneqq \{ x<2^k \mid  C(x) = 1 \}$, and ask about the task of counting the size of $X_C$.

There exists a $\PV$-function $\operatorname{Count}(C, y) = |X_C \cap |y||$. This means that if $2^k \in \Log$, then one can do \emph{exact counting} of $|X_C|$ efficiently. We use the notation $\Pr_{x<|y|}[C(x) =1] \leq z/w$ for the $\PV$-relation $w \cdot \operatorname{Count}(C,y) \leq |y| \cdot z$.

If $2^k \not\in \Log$, exact counting becomes problematic. To avoid this, \citeauthor{Jer07} \cite{Jer05-PhD,Jer07} systematically developed the theory $\APC$ capturing probabilistic polynomial-time reasoning by means of approximate counting. The theory $\APC$ is defined as $\PVO + \dWPHP(\PV)$ where $\dWPHP(\PV)$ stands for the \emph{dual (surjective) pigeonhole principle} for all $\PV$-functions. That is, the set of all formulas
\begin{equation*}
    x>0\rightarrow\exists v<x(|y|+1).\forall u<x|y|.\ f(u)\neq v, \tag{$\dWPHP$}
\end{equation*}
where $f$ is a $\PV$-function which might involve other parameters not explicitly shown.

We write $C:X\twoheadrightarrow Y$ if $C$ is a surjective mapping from $X$ to $Y$. Let $X,Y\subseteq 2^n$ be definable sets, and $\epsilon\leq 1$. The size of $X$ is \emph{approximately less than the size of $Y$ with error $\epsilon$}, written as $X\preceq_{\epsilon} Y$, if there exists a circuit $C$, and $v\neq 0$ such that
$$C: v\times (Y\EmilU \epsilon 2^n)\twoheadrightarrow v\times X.$$
In this context, the notation $X\approx_{\epsilon}Y$ stands for $X\preceq_{\epsilon} Y$ and $Y\preceq_{\epsilon} X$. As with exact counting, the notation $\Pr_{x<y}[C(x) = 1] \circ_\epsilon z/w$ stands for $w \cdot (X_C \cap y) \circ_{\epsilon} y\cdot z$, for $\circ \in \{ \preceq, \approx\}$. Since a number $s$ is identified with the interval $[0,s)$, $X\preceq_{\epsilon} s$ means that the size of $X$ is at most $s$ with error $\epsilon$.

The definition of $X\preceq_{\epsilon} Y$ is an unbounded $\exists \Pi^b_2$ formula even if $X$ and $Y$ are defined by circuits, so it cannot be used freely in bounded induction. This problem can be solved by working in ${\sf sHARD}^{\text{A}}$, defined as the relativized theory $\SOT(\alpha)$ extended with axioms postulating that $\alpha(x)$ is a truth table of a function on $||x||$ variables hard on average for circuits of size $2^{||x||/4}$. In ${\sf sHARD}^{\text{A}}$ there is a $\PV(\alpha)$ function $\setsize$ approximating the size of any set $X\subseteq 2^n$ defined by a circuit $C$ so that $X\approx_{\epsilon} \setsize(\alpha, C,2^n,2^{\epsilon^{-1}})$ for $\epsilon^{-1}\in \Log$ (by combination of \cite[Lemma 2.14]{Jer07} and \cite[Cor. 3.6]{Jer04-Dual}).

The following key definition allows us to express that a function is indeed hard on average.

\begin{definition}[$\operatorname{Hard}^\text{A}_{\epsilon}(f)$, in $\PVO$ \cite{Jer07}] Let $f:2^k\rightarrow 2$ be a truth table of a Boolean function with $k$ inputs (with $f$ encoded as a string of $2^k$ bits, and hence with $k\in \LogLog$). We say that $f$ is \emph{average-case $\epsilon$-hard}, written as $\operatorname{Hard}^\text{A}_{\epsilon}(f)$, if for every circuit $C$ of size at most $2^{\epsilon k}$, $$|\{u<2^k\mid C(u)=f(u)\}| < (1/2+2^{-\epsilon k})2^k.$$
Note that $\operatorname{Hard}^\text{A}_{\epsilon}(f)$ is $\Pi_1^b$-definable in $\PVO$.
\end{definition}

We write $\operatorname{tt}^{\text{avg}}_\epsilon (f_k, 2^{\epsilon k}) \coloneqq || \operatorname{Hard}^\text{A}_{\epsilon}(f) ||_m$ for the propositional translation (see Section \ref{ss:proptran}) of the formula $\operatorname{Hard}^\text{A}_{\epsilon}(f)$ above, and an appropriately chosen parameter $m$ depending on $k$ and $\epsilon$.
We also consider the polynomial-time function $\tt^{\delta}(s, n, C, f)$, that checks whether $f$ is a string of length $2^n$, $C$ encodes a circuit of size at most $s$, and finally verifies whether the fraction of accepted inputs is larger than $(1/2 + 2^{-\delta n})2^n$.

The theory $\APC$ is strong enough to show that hard-on-average functions do exist.

\begin{proposition}[\citeauthor{Jer07} \cite{Jer04-Dual}] For every rational constant $\epsilon <1/3$, there exists a constant $c$ such that $\APC$ proves that for every $k\in \LogLog$ such that $k\geq c$, there exist a function $f:2^k\rightarrow 2$ that is average-case $\epsilon$-hard. \end{proposition}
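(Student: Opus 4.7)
The plan is to formalize inside $\APC$ the standard counting argument that a uniformly random function on $k$ bits is hard on average. Concretely, I would argue that any function $f : 2^k \to 2$ which fails to be $\epsilon$-hard can be compressed into a pair $(C, D)$, where $C$ is a circuit of size at most $2^{\epsilon k}$ approximating $f$ on more than a $(1/2 + 2^{-\epsilon k})$-fraction of inputs, and $D \subseteq 2^k$ is the error set of size strictly less than $(1/2 - 2^{-\epsilon k})\cdot 2^k$ on which $C$ and $f$ disagree. Decoding is a $\PV$-function: given $(C, D)$, reconstruct $f$ by flipping $C$ on inputs in $D$.

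The core combinatorial step is then to check that the number of descriptions is strictly below $2^{2^k}$ for $\epsilon < 1/3$ and $k \geq c$. The number of circuits of size at most $s = 2^{\epsilon k}$ is bounded by $2^{O(s \log s)} = 2^{O(\epsilon k \cdot 2^{\epsilon k})}$, while the number of subsets of $2^k$ of size below $(1/2 - 2^{-\epsilon k}) 2^k$ is bounded, via the binary entropy estimate, by $2^{(1 - \Omega(2^{-2\epsilon k}))\cdot 2^k}$, i.e.\ the description length is at most $2^k - \Omega(2^{(1 - 2\epsilon)k}) + O(\epsilon k \cdot 2^{\epsilon k})$. For $\epsilon < 1/3$ one has $1 - 2\epsilon > \epsilon$, so the saving term dominates the circuit-counting overhead, and the total number of descriptions is at most $2^{2^k - 1}$ once $k$ exceeds some constant $c = c(\epsilon)$.

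With this in place, I would apply $\dWPHP(\PV)$ to the decoding map, suitably packaged so that its domain fits inside $x |y|$ and its codomain contains $x(|y|+1)$, where $|y| = 2^k \in \Log$ (which is where the hypothesis $k \in \LogLog$ is used, letting us manipulate $2^k$-sized objects as $\PV$-numbers). The dual pigeonhole axiom then yields a function $f : 2^k \to 2$ outside the image of the decoding map; by construction, $f$ must satisfy $\hardav_\epsilon(f)$. If the single-step gap $(|y|+1)/|y|$ is too small, I would first iterate or rescale the instance of $\dWPHP$ — this is standard in $\APC$ — so that the quantitative gap between description space and function space matches the one the axiom delivers.

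The main obstacle I expect is not the probabilistic idea, which is classical, but the feasibility of the counting bookkeeping inside $\PVO$: one needs a $\PV$-definable bijective/surjective encoding of subsets of $2^k$ of bounded size, together with a provable entropy-style upper bound on their number. This requires formalizing the binomial coefficient estimate $\sum_{j \leq (1/2 - \delta)2^k} \binom{2^k}{j} \leq 2^{H(1/2 - \delta) 2^k}$ in a way compatible with the $\APC$ machinery — essentially the content of Jeřábek's approximate counting toolkit — and then carefully matching the resulting quantitative bound to the specific shape of $\dWPHP$. Once the encoding and the bound are in place, the conclusion is a direct application of the axiom.
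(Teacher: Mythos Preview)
The paper does not give its own proof of this proposition: it is quoted as a background result from Je\v{r}\'abek \cite{Jer04-Dual} and used as a black box, so there is nothing in the paper to compare your argument against line by line.

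That said, your outline is the correct one and matches the argument in Je\v{r}\'abek's original paper: encode every easy function $f$ by a pair (small circuit $C$, small error set $D$), observe that decoding is a $\PV$-function, bound the number of descriptions below $2^{2^k}$ using the entropy estimate on $\binom{2^k}{\le(1/2-2^{-\epsilon k})2^k}$ together with the circuit count $2^{O(\epsilon k 2^{\epsilon k})}$, and then invoke $\dWPHP(\PV)$ to produce an $f$ not in the image. The restriction $\epsilon<1/3$ is exactly what makes the saving term $2^{(1-2\epsilon)k}$ dominate the circuit-count overhead, as you note. The genuine work, which you correctly flag, is the $\PV$-feasible coding of small subsets of $2^k$ and the quantitative Chernoff/entropy bound inside $\APC$; this is precisely the machinery Je\v{r}\'abek develops, and once it is available the rest is a direct application of the axiom. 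So your proposal is correct and is essentially the same proof as the cited source, not a new route.
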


The theory $\SOT$ can be relativized to $\SOT(\alpha)$. This means, in particular, that the language of $\SOT(\alpha)$, denoted also $\SOT(\alpha)$, contains symbols for all polynomial-time machines with access to the oracle $\alpha$.

\begin{definition}[${\sf sHARD}^{\text{A}}$ \cite{Jer04-Dual}] The theory ${\sf sHARD}^{\text{A}}$ is an extension of the theory $\SOT(\alpha)$ by the axioms stating
\begin{enumerate}
    \item the number $\alpha(x)$ encodes the truth table of a Boolean function in $||x||$ variables;
    \item $x\geq c\rightarrow \operatorname{Hard}^\text{A}_{1/4}(\alpha(x))$, where $c$ is the constant from the previous proposition;
    \item $||x||=||y||\rightarrow \alpha(x)=\alpha(y)$. 
\end{enumerate}
\end{definition}

The key technical tool from the framework of approximate counting is the following theorem by \citeauthor{Jer07}.

\begin{theorem}[\citeauthor{Jer07} \cite{Jer07}]\label{thm:emil-master}There is a $\PV(\alpha)$-function $\setsize$ such that ${\sf sHARD}^{\text{A}}$ proves that if $X\subseteq 2^n$ is definable by a circuit $C$, then $X\approx_{\epsilon} \setsize(\alpha, C,2^n,e)$, where $\epsilon=|e|^{-1}$.
\end{theorem}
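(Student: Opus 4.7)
The plan is to define $\setsize$ via a Nisan--Wigderson pseudorandom generator built from the hard truth table encoded by $\alpha$, and to verify correctness inside ${\sf sHARD}^{\text{A}}$ by a contrapositive reduction to the hardness axiom. Given inputs $(\alpha, C, 2^n, e)$ with $\epsilon = |e|^{-1}$, I would pick a parameter $k$ of order $\log(n + |C| + 1/\epsilon)$ so that $\alpha$ supplies a function $f:2^k\to 2$ average-case $1/4$-hard against circuits of size $2^{k/4}$, with $2^{k/4}$ comfortably above $|C|$ and any extra overhead from the reduction. Using a polynomial-time-constructible combinatorial design $\{S_1,\ldots,S_n\}\subseteq[\ell]$ with $|S_i|=k$ and logarithmic pairwise intersection, the NW generator $G_f:2^\ell\to 2^n$ given by $G_f(z)_i := f(z|_{S_i})$ has seed length $\ell\in\Log$. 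Setting
\[
\setsize(\alpha, C, 2^n, e) := \left\lfloor \frac{2^n}{2^\ell} \cdot |\{z < 2^\ell : C(G_f(z)) = 1\}|\right\rfloor,
\]
the inner count ranges over a logarithmic-size domain, so the whole expression is computable as a $\PV(\alpha)$-function via the exact counting routine $\operatorname{Count}$ recalled in \Cref{subsec:apc}.

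To verify $X_C \approx_\epsilon \setsize(\alpha, C, 2^n, e)$ in ${\sf sHARD}^{\text{A}}$, I would argue by contrapositive: if $|X_C|/2^n$ and the above estimate differ by more than $\epsilon$, a standard hybrid / next-bit-predictor argument extracts an index $i\in[n]$, values $y$ for the seed bits outside $S_i$, and a small combining circuit $D$ that, after hardwiring the contributions of the other $S_j$ via lookup tables of size $2^{O(|S_i\cap S_j|)}$, produces a circuit of total size at most $n\cdot 2^{o(k)} + |C| + \mathrm{poly}(n) \leq 2^{k/4}$ predicting $f$ with advantage $\epsilon/n \geq 2^{-k/4}$, contradicting the hardness axiom of ${\sf sHARD}^{\text{A}}$. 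The explicit surjections witnessing $\preceq_\epsilon$ in both directions are then assembled from $G_f$ by padding, cylinderization, and the exact-counting operations already available on $\Log$-sized domains, in the style of \cite{Jer07}.

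The main obstacle is formalizing the hybrid argument inside ${\sf sHARD}^{\text{A}}$ so that the predicting circuit is \emph{definably} extracted, not merely shown to exist. The natural strategy is to iterate through $i \in [n]$, recursively invoke $\setsize$ on strictly simpler sub-instances to estimate each hybrid probability with error $\epsilon/(4n)$, and select the first $i$ exhibiting a gap of at least $\epsilon/n$; the size bookkeeping must keep every recursive call under the hardness slack $2^{k/4}$, and the design construction, hybrid composition, and truth-table plug-ins must all be $\PV(\alpha)$-computable with $\Sigma_1^b$-verifiable specifications. These steps are delicate but standard, and precisely explain why the theorem lives in the relativized ${\sf sHARD}^{\text{A}}$ rather than in $\SOT$ alone.
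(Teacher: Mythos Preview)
The paper does not prove this theorem; it is quoted from Je\v{r}\'abek's work and stated with citation only (indeed, the text just before it says it arises ``by combination of \cite[Lemma 2.14]{Jer07} and \cite[Cor. 3.6]{Jer04-Dual}''). So there is no in-paper proof to compare your proposal against.

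That said, your high-level plan---build a Nisan--Wigderson generator from the hard truth table supplied by $\alpha$, estimate $|X_C|$ by exact counting over the $\Log$-sized seed space, and argue correctness by a distinguisher-to-predictor reduction against the hardness axiom---is exactly the architecture of Je\v{r}\'abek's proof. One point deserves care: your suggestion to ``recursively invoke $\setsize$ on strictly simpler sub-instances to estimate each hybrid probability'' is not how the formalization actually goes, and taken at face value it is circular (you would be assuming the correctness of $\setsize$ to establish it). In Je\v{r}\'abek's argument the intermediate hybrids are not estimated via $\setsize$; rather, the averaging step that isolates a good index $i$ and a good fixing of the off-design bits is carried out using $\dWPHP$ and exact counting over $\Log$-sized domains, which are already available in the theory without appealing to the function being defined. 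The surjections witnessing $\approx_\epsilon$ are then constructed directly from the generator and the counting data, not by unfolding a recursion on $\setsize$. If you revise the hybrid step to rely only on $\dWPHP$ and $\Log$-domain exact counting, your sketch aligns with the intended proof.
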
 

For a circuit $C : 2^n \to 2$, we introduce the notation
$$\Pr_{x < y}[C(x) = 1] \preceq_\epsilon^f \frac{z}{w}$$
to mean $w \cdot \setsize(f, C, 2^n, e) \leq y \cdot z$, where $\epsilon = |e|^{-1}$.

\subsubsection{Correspondences and propositional translations}\label{ss:proptran}
While our formalizations are comfortably carried out in the first-order theories presented above, we are able to transfer our results back to propositional logic thanks to the existence of \emph{propositional translations}. Following \citeauthor{krajicekBOOK} \cite{krajicekBOOK}, we say that a theory $T$ \emph{corresponds} to a propositional proof system $S$ if (i) $T$ can prove the soundness of $S$ and (ii) every universal consequence $\forall x \varphi(x)$ of $T$, where $\varphi$ is quantifier-free, admits polynomial-size proofs in $S$ when grounded into a sequence of propositional formulas. \citeauthor{Pud20} alternatively says that $S$ is the \emph{weak system} of the theory $T$ \cite{Pud20}. More formally, for such a universal formula $\varphi$, we denote by $\trans{\varphi}_n$ the propositional translation for models of size $n$. Sometimes we abuse the notation and write $\trans{\varphi}$ dropping the subscript $n$. We refer the reader to standard texts like those of \citeauthor{krajicekBOOK} \cite{krajicekBOOK} or \citeauthor{cookBOOK} \cite{cookBOOK} for formal definitions of the translation.

The key fact for us is that universal theorems of $\SOT$ admit short propositional proofs in Extended Frege. More importantly, $\SOTEXP$ corresponds to Implicit Extended Frege.

\begin{theorem}[Correspondence of $\SOTEXP$ and $\iEF$ {\cite[Thm. 2.1]{Kra04}}]
\label{thm:kra04-correspondence-iEF}
The proof system $\iEF$ corresponds to $\SOTEXP$. That is,
\begin{enumerate}[label=(\roman*)]
    \item the theory $\SOTEXP$ proves the soundness of $\iEF$;
    \item whenever a $\forall\Pi_1^b$-sentence $\forall x \varphi(x)$ is provable in $\SOTEXP$, there are polynomial-size $\iEF$-proofs of the sequence of tautologies $\{\trans{\varphi}_n\}_{n\in \bbN}$;
    \item if $\SOTEXP$ proves the soundness of some propositional system $S$, then $\iEF \geq S$.
\end{enumerate}
\end{theorem}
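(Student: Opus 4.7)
The plan is to prove the three parts of the correspondence in turn, using the classical Cook-Buss correspondence between $\SOT$ and $\EF$ as a black box and adding the exponential layer supplied by $\iEF$'s implicit-proof structure and $\SOTEXP$'s indirect definition.

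For part (i), recall that an $\iEF$-proof of $\varphi$ is a pair $(\pi, C)$ where $\pi$ is an $\EF$-proof of $\operatorname{Correct}_{\EF}(\varphi, C)$ and the truth table of $C$ is meant to be an $\EF$-proof of $\varphi$. I would first invoke the well-known $\SOT$-formalization of $\EF$-soundness applied to the explicit proof $\pi$, yielding $\operatorname{Correct}_{\EF}(\varphi, C)$. The main step is then to apply $\EF$-soundness a second time, now to a proof of exponential length: this is precisely where the indirect definition of $\SOTEXP$ bites, permitting us to reason under the assumption that $y = 2^{|C|^{O(1)}}$ exists, so that the length-uniform $\SOT$-proof of $\EF$-soundness can walk down the truth table of $C$ line by line and conclude that $\varphi$ is tautological.

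For part (ii), unfold the definition of $\SOTEXP \vdash \forall x \varphi(x)$ to obtain a term $t$ with $\SOT \vdash \forall x \forall y\, (t(x) \leq |y| \to \varphi(x))$. By Cook-Buss, this yields a polynomial-time procedure producing $\EF$-proofs of the translation of $t(x) \leq |y| \to \varphi(x)$ of size polynomial in the length parameters of $x$ and $y$. The crucial idea is to instantiate $y$ by an exponentially long constant of length $t(2^n)$, which discharges the hypothesis and yields an $\EF$-proof of $\trans{\varphi}_n$ of size exponential in $n$. This is where the implicit structure earns its keep: the uniformity of the Cook-Buss construction implies that the bits of this exponential-size proof are computable by a polynomial-size circuit $C$, and the statement that $C$ outputs a correct $\EF$-proof of $\trans{\varphi}_n$ is a polynomial-time checkable property provable in $\SOT$, hence enjoys polynomial-size $\EF$-proofs $\pi$. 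The pair $(\pi, C)$ is the desired polynomial-size $\iEF$-proof of $\trans{\varphi}_n$.

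For part (iii), the soundness of $S$ is a $\forall\Pi_1^b$-sentence provable in $\SOTEXP$ by assumption, so (ii) supplies polynomial-size $\iEF$-proofs of $\trans{\operatorname{Ref}_S}_n$. A standard KPT-style substitution argument, entirely analogous to how $\EF + \operatorname{Ref}_S$ simulates an arbitrary $S$, then lifts this to $\iEF \geq S$: given an $S$-proof $\sigma$ of a tautology $\psi$, substitute it into the reflection proof and finish with propositional reasoning on the satisfaction predicate. The main obstacle is the bookkeeping in (ii): extracting from the Cook-Buss translation an explicit polynomial-size circuit $C$ computing the bits of an exponentially long $\EF$-proof, and then a polynomial-size $\EF$-proof $\pi$ that $C$ is correct, hinges on exploiting the uniformity of $\SOT$'s own proof-construction together with the provable correctness of polynomial-time computations in $\SOT$. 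The corresponding step in (i), applying $\EF$-soundness to a truth table of exponential length, is in the same spirit and reduces to checking that the $\SOT$-proof of $\EF$-soundness is length-uniform.
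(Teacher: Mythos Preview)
The paper does not prove this theorem: it is quoted verbatim from Kraj\'{i}\v{c}ek \cite[Thm.~2.1]{Kra04} as a preliminary fact, and no proof (or even sketch) is given in the present paper. There is therefore nothing to compare your proposal against here.

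For what it is worth, your sketch is faithful to Kraj\'{i}\v{c}ek's original argument. In particular, the two ideas you single out---applying $\EF$-soundness a second time to the exponentially long truth table of $C$ under the $\OEXP$ assumption for (i), and compressing the exponential-size $\EF$-proof produced by Cook's translation into a polynomial-size circuit whose correctness has a short $\EF$-proof for (ii)---are exactly the moves in \cite{Kra04}. Part (iii) is indeed the routine reflection-implies-simulation argument. One small caveat on (iii): to finish the simulation inside $\iEF$ rather than in $\EF+\operatorname{Ref}_S$, you also need that $\iEF$ proves its own reflection for $\EF$-style reasoning (equivalently, that $\iEF$ is closed under modus ponens and substitution at the propositional level), which is straightforward but worth stating.
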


The translation also works for formulas beyond $\forall \Pi^{b}_1$ as long as we translate into a quantified propositional system. The definition of the translation is straightforward, and we note that $\Sigma^b_1$ consequences of $\SOT$ translated as $\Sigma_1^q$ formulas admit polynomial-size proofs in $\GOS$.

\begin{theorem}[Correspondence of $\SOT$ and $\GOS$ \cite{KP90}]
\label{thm:KP90-correspondence-S12-G1*}
Whenever a $\forall\Sigma_1^b$-sentence $\forall x \exists y \leq t. \varphi(x, y)$ is provable in $\SOT$, there are polynomial-size proofs of the sequence of $\Sigma^q_1$-formulas $\{\trans{\exists x \varphi(x, y)}_n\}_{n\in \bbN}$ in $\GOS$.
\end{theorem}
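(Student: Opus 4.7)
The plan is to prove this by induction on the structure of a cut-free or free-cut-free $\SOT$ proof of $\forall x\exists y\leq t.\ \varphi(x,y)$, translating each sequent appearing in the proof into a propositional sequent of $\Sigma_1^q\cup\Pi_1^q$ formulas and producing polynomial-size tree-like $\GOS$ derivations of these translated sequents. By Buss's witnessing structure for $\SOT$, one can assume the proof is organized so that every formula is in $\Sigma_1^b\cup\Pi_1^b$ (using $\PV$ function symbols to absorb lower-complexity witnesses), which matches the quantifier-complexity restriction built into $\GOS$.

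First I would handle the base cases: the open axioms of $\SOT$ (equality axioms, defining equations of $\PV$ function symbols, and BASIC axioms) become propositional tautologies whose translations admit short $\EF$ proofs, and hence short $\GOS$ proofs since $\GOS$ simulates $\EF$ on $\Pi_1^q$ sequents. For each first-order inference rule of the Sequent Calculus, I would exhibit a uniform polynomial-size $\GOS$ derivation mapping the translations of the hypotheses to the translation of the conclusion; structural and propositional rules go through directly, while sharply bounded quantifier rules are unfolded by conjunctions and disjunctions of polynomial length, and the bounded existential/universal quantifier rules are simulated by the $\Sigma_1^q$ and $\Pi_1^q$ quantifier rules of $\Gent$, making crucial use of the fact that substitution instances of a $\PV$ term are just circuits whose translations can be treated as extension variables.

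The main obstacle, as usual in these correspondence proofs, is the $\Sigma_1^b$-\textsc{PInd} rule. Given a $\GOS$-derivation of the translation of $\varphi(\lfloor x/2\rfloor)\to\varphi(x)$ from hypotheses of polynomial size, I need to produce a polynomial-size tree-like derivation of $\varphi(x)$ from $\varphi(0)$. The key is to exploit the halving structure: iterating the step formula $|x|$ times suffices, and because $|x|$ is logarithmic in $x$, chaining $|x|$ substitution instances of the step along the binary expansion of $x$ yields a derivation of depth $O(|n|)$ on inputs of length $n$ after translation. The tree-likeness of $\GOS$ is delicate here, since one cannot reuse the step derivation; however, since each substitution instance has polynomial size in $n$ and there are only $|n|$ of them composed in a chain, the total size remains polynomial. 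The quantifier complexity is preserved because each step operates on a single $\Sigma_1^q$ formula, which matches exactly the allowed formula class of $\GOS$.

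Finally, I would specialize the induction to the endsequent $\forall x\exists y\leq t.\ \varphi(x,y)$: its translation for models of size $n$ is a $\Sigma_1^q$-formula $\trans{\exists y.\ \varphi(x,y)}_n$ with $x$ appearing as propositional parameters, and the preceding derivation produces a $\GOS$-proof of it of size polynomial in $n$. The remaining routine bookkeeping is ensuring that the free variables of the translated formulas line up correctly with the first-order parameters, and that the bounding term $t$ has a polynomial-size circuit translation so that the bounded quantifier $\exists y\leq t$ translates to a genuine $\Sigma_1^q$-quantifier over bits of $y$. I expect the $\Sigma_1^b$-\textsc{PInd} simulation in a tree-like system to be the only conceptually delicate point; everything else is a bookkeeping exercise that parallels the classical Krajíček–Pudlák construction.
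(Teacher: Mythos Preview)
The paper does not give its own proof of this theorem; it is quoted from the literature (Kraj\'{\i}\v{c}ek--Pudl\'{a}k \cite{KP90}) and used as a black box. There is therefore nothing in the paper to compare your argument against.

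That said, your proposal is the standard route to this result and is broadly correct: normalize the $\SOT$ proof to a free-cut-free derivation with all formulas in $\Sigma_1^b\cup\Pi_1^b$, translate sequents and simulate each inference by short $\GOS$ derivations, with \textsc{PInd} being the only nontrivial case. One small slip: in your \textsc{PInd} analysis you write that chaining the step yields a derivation of depth $O(|n|)$ with ``only $|n|$'' substitution instances. For inputs of length $n$ (i.e.\ $x$ with $|x|\le n$), the halving chain from $x$ down to $0$ has $|x|\le n$ links, not $|n|=O(\log n)$; so the correct count is $n$ instances, each of polynomial size, giving total size polynomial in $n$. Your conclusion about polynomial size is still right, but the intermediate count is off by a logarithm. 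Otherwise the sketch is faithful to the classical construction.
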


\subsection{Interactive proof systems and the sum-sheck protocol}
While our focus is on propositional proof systems in the sense of Cook and Reckhow, our work exploits relations to more lax notions of provability. Following \citeauthor{Bab85} \cite{Bab85}, an \emph{Merlin-Arthur proof system} or \emph{Merlin-Arthur protocol} for a language $L \subseteq \{ 0,1\}^*$ is a polynomial-time function $S$ together with some constant $c$ such that the two following properties are satisfied for every $x \in \{ 0,1\}^*$. Namely,
\begin{enumerate}
    \item if $x \in L$, then there exists some $\pi \in \{ 0,1\}^*$ such that $\Pr_{r \in \{ 0,1\}^{(|x| + |\pi|)^c}} [S(x, \pi, r) = 1] = 1$;
    \item if $x\not\in L$, then for every $\pi \in \{0,1\}^*$, $\Pr_{r \in \{ 0,1\}^{(|x| + |\pi|)^{c}}} [S(x, \pi, r) = 1] < 1/3$.
\end{enumerate}

The first condition formalizes \emph{completeness}, while the second corresponds to \emph{soundness}.
The complexity class $\MA$ contains all languages that admit a polynomially-bounded Merlin-Arthur protocol, meaning that there exists a constant $d$ such that the completness guarantee is strengthened to proofs $\pi \in \{0,1\}^{|x|^d}$. One should think of MA proof systems as Cook-Reckhow systems where the verifier is randomized and may thus accept some incorrect proofs with small probability.

We recall that, under the standard derandomization assumption that there exists a Boolean function family in $\E$ that is wort-case hard for subexponential-size circuits, every Merlin-Arthur system derandomizes into a Cook-Reckhow system and, in particular, $\MA = \NP$ \cite{NW94, IW97}.

Our proofs rely on a particular interactive protocol, the \emph{Sum-Check Protocol} of Lund, Fortnow, Karloff, and Nisan \cite{LFKN92-SumCheck} for the language of unsatisfiable 3CNFs. Unlike Merlin-Arthur protocols, this is an interactive protocol running for multiple rounds between a Prover and a Verifier, before the Verifier makes a decision. We now recall the details of the protocol.

\paragraph{The Sum-Check Protocol \cite{LFKN92-SumCheck}} The protocol considers a 3CNF $\varphi(x_1, \dots, x_n)$ over $m$ clauses, known to both the Verifier and the Prover.

\begin{enumerate}
    \item The Prover generates a prime number\footnote{The constant $c_p$ in the exponent comes from the formalization of the soundness of the sum-check protocol inside $\SOTEXP$ in a recent work of \citeauthor{Kha23} \cite{Kha23}; while we do not need such details in our proofs, we leave it here to be faithful to the formalization.} $p \in (2^{2n^3 + n}, 2^{{(2n^3+n)}^{c_p}}]$ together with a Pratt certificate\footnote{A \emph{Pratt certificate} is a succinct witness for primality checkable in polynomial time \cite{Pra75}. The details are not relevant for our results, but it is important that the Verifier can be convinced of $p$ being a prime.} on the primality of $p$ and sends them to the Verifier, who checks for correctness of the certificate, and aborts if incorrect.
    \item The Prover and the Verifier arithmetize $\varphi$ into a polynomial $P_\varphi(x_1, \dots, x_n)$ of degree at most $3m$ over $\mathbb{F}_p$ in the usual way: a clause like $(x \lor \neg y \lor z)$ is turned into $1 - (1-x)y(1-z)$, and one then takes the product of all such arithmetized clauses. In this way, for all $x\in \{0,1\}^n$, $\varphi(x) = 1$ if and only if $P_\varphi(x) = 1$.

    \item The Verifier sets $( a_1, \dots, a_n) \coloneqq (0, \dots, 0)$, $Q_0(a_0):=0$ and for $i \in \{ 1, \dots n\}$, the following interaction is carried out:
    \begin{enumerate}
        \item Leaving $x_i$ free, the Prover computes the coefficients of the following univariate polynomial over $\mathbb{F}_p$,
        $Q_i(x_i) \coloneqq \sum_{x_{i+1}\in\{0,1\}} \dots \sum_{x_{n}\in\{0,1\}} P_\varphi(a_1, \dots, a_{i-1}, x_i, x_{i+1}, \dots, x_n)$
        and sends the $O(m)$ coefficients of $Q_i$ to the Verifier.

        \item The Verifier checks whether $Q_i(0) + Q_i(1) = Q_{i-1}(a_{i-1})$. If the check fails, the Verifier rejects. Otherwise, it samples a random $a_{i} \in \mathbb{F}_p$ and sends it to the Prover.

        \item In the final round, instead of sending $a_n$ to the Prover, the Verifier checks whether $P_\varphi(a_1, \dots, a_n) = Q_n(a_n)$ and accepts or rejects based on this.
    \end{enumerate}
\end{enumerate}

\section{Main result}
Our proof exploits the known fact that if $\#\P \subseteq \FP/\poly$, then $\coNP \subseteq \MA$. Indeed, if $\#\P$ has small circuits one can provide polynomial-size circuits that simulate the Prover's movements in the Sum-Check protocol for $\UNSAT$, since one can consider the MA proof system in which Arthur receives from Merlin a circuit claiming to be the circuit that the Prover used to carry out their strategy, and with the aid of randomness, Arthur can execute this on his own and decide based on the outcome of this simulation.

Let us make this formal.

\begin{definition}[The $\SC$ proof system]\label{def:sc}
Let $V(p, u, \varphi, C, r)$ be the polynomial-time function carrying out the simulation of the Sum-Check protocol. Namely, $p$ is intended to be a prime in $(2^{2n^3 + n}, 2^{{(2n^3+n)}^{c_p}}]$, $u$ a Pratt certificate for $p$, $\varphi$ a 3CNF over $n$ variables, $r$ a string of random bits, and $C$ a multi-output circuit providing the Prover's responses in the interactions with the Verifier in the Sum-Check protocol.

The \emph{Sum-Check Proof System}, denoted by $\SC$, is a Merlin-Arthur proof system for proving 3DNF tautologies. An $\SC$ proof of $\varphi$ is a tuple $\langle p,u, C \rangle$ such that $p$ is indeed a prime in the interval above, correctly certified by the Pratt certificate $u$, and such that $\Pr_{r\in\bF_p^n}\big[V(p,u,\neg\varphi,C,r)=1\big]=1$.
  %  where $c$ is the constant from Theorem \ref{soundness}.
\end{definition}

The following is just a rephrasing of the fact that $\#\P \subseteq \FP/\poly$ implies $\coNP \subseteq \MA$, in terms of the Merlin-Arthur system $\SC$.

\begin{lemma}\label{SCBounded}
    If $\#\P \subseteq \FP/\poly$, then $\SC$ is polynomially bounded over 3DNF tautologies.
\end{lemma}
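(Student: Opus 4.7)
The plan is to exhibit, for every $3$DNF tautology $\varphi$, a polynomial-size $\SC$-proof by invoking the standard completeness of the sum-check protocol together with the hypothesis $\#\P\subseteq\FP/\poly$ to construct a circuit simulating Merlin's strategy. Let $\psi = \neg \varphi$ be the unsatisfiable $3$CNF of size $n$ corresponding to $\varphi$; the $\SC$-proof of $\varphi$ will be a triple $\langle p, u, C\rangle$ witnessing that the Verifier $V(p,u,\psi,C,r)$ accepts with probability $1$.

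First, I would fix a prime $p \in (2^{2n^3+n}, 2^{(2n^3+n)^{c_p}}]$ together with a Pratt certificate $u$ of size polynomial in $n$; such $p$ and $u$ exist by the prime number theorem and by \cite{Pra75}, and can be hardcoded into the proof. Next, I would analyze the Prover's strategy in the sum-check protocol. At round $i$, given the partial transcript $(a_1,\dots,a_{i-1})\in\mathbb{F}_p^{i-1}$, the Prover must output the $O(m)$ coefficients of the univariate polynomial
\[
Q_i(x_i)=\sum_{x_{i+1}\in\{0,1\}}\cdots\sum_{x_n\in\{0,1\}} P_\psi(a_1,\dots,a_{i-1},x_i,x_{i+1},\dots,x_n)\pmod{p}.
\]
Since $P_\psi$ is a polynomial of polynomial-bit-size arithmetic complexity, each summand is computable in $\FP$ once $p$ is fixed (note that $|p|$ is polynomial in $n$). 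The sum itself is a $\mathsf{GapP}$-type aggregate modulo $p$, and hence the function $\mathcal{M}(\psi,i,a_1,\dots,a_{i-1})$ producing the coefficients of $Q_i$ lies in $\FP^{\#\P}$.

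Under the hypothesis $\#\P\subseteq\FP/\poly$, the closure $\FP^{\#\P}\subseteq\FP^{\FP/\poly}=\FP/\poly$ yields a polynomial-size circuit family computing $\mathcal{M}$. By hardcoding $\psi$ (together with the appropriate advice string for the input length $|\psi|$), I obtain a polynomial-size multi-output circuit $C$ that, on input the current round $i$ and the partial transcript, outputs the coefficients of $Q_i$ modulo $p$. By the (standard, perfect) completeness of the sum-check protocol applied to the honest Prover strategy that this circuit implements, we have $\Pr_{r\in\mathbb{F}_p^n}[V(p,u,\psi,C,r)=1]=1$. Therefore $\langle p,u,C\rangle$ is a valid $\SC$-proof of $\varphi$ whose size is polynomial in $|\varphi|$.

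The main subtlety, and the step I would write most carefully, is the classification of Merlin's strategy as an $\FP^{\#\P}$ function and the resulting application of the hypothesis: one must check that working over $\mathbb{F}_p$ for exponentially large $p$ (but polynomial-size $|p|$) is unproblematic, and that hardcoding $\psi$ and the advice for the $\FP/\poly$ circuit still yields a circuit of size polynomial in $|\varphi|$. Everything else (existence of $p$ and $u$, completeness of sum-check) is a direct appeal to standard facts recalled in the preliminaries.
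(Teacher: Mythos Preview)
Your proposal is correct and follows essentially the same approach as the paper: fix a prime with a Pratt certificate, argue that the honest Prover's round function lies in $\FP^{\#\P}$, collapse this to $\FP/\poly$ under the hypothesis, and invoke perfect completeness of the sum-check protocol. The only cosmetic differences are that the paper cites Bertrand's postulate rather than the prime number theorem for the existence of $p$, and it spells out the $\FP^{\#\P}$ computation of the coefficients of $Q_i$ explicitly as ``evaluate at $O(m)$ points in $\mathbb{F}_p$ and interpolate,'' whereas you package this into the closure $\FP^{\#\P}\subseteq\FP/\poly$; your phrasing is arguably more precise, since the summands of $Q_i$ at non-Boolean points are general $\mathbb{F}_p$-elements rather than $\{0,1\}$-values.
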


\begin{proof}
    Suppose $\#\P \subseteq \FP/\poly$ and observe closely the computational tasks of the Prover in the Sum-Check protocol. On input a formula $\varphi$ over $n$ variables, the Prover sends a prime number in the range $(2^{2n^3 + n}, 2^{{(2n^3+n)}^{c_p}}]$. Note that the well-known Betrand's postulate in number theory states that for every $a > 3$, there is a prime in the interval $(a, 2a-2)$. Since $2^{{(2n^3+n)}^{c_p}} > 2 \cdot 2^{2n^3 + n} - 2$, such a prime $p$ always exists in our interval, which we fix for all our proofs of formulas over $n$ variables.
    
    We shall now argue that, on inputs of size $n$, there is a multi-output circuit $C_n$ taking as input the number of the round in the protocol and the information sent by the Verifier, and which outputs the coefficients of the polynomial $Q_i$. Note that for formulas over $m$ clauses, this is an $O(m)$-degree polynomial, and thus it suffices to evaluate it at $O(m)$ points in the field (say, the first $O(m)$ elements in $\mathbb{F}_p$) and then solve a system of linear equations to learn the coefficients. The hard task is to evaluate the polynomial $Q_i$, but this is precisely a $\#\P$ task, since it amounts to adding the outputs of the function $P_\varphi(a_1, \dots, a_{i-1}, x_i, x_{i+1}, \dots, x_n)$, which can be efficiently evaluated, for every possible $(x_{i+1}, \dots, x_n) \in \{0,1\}^{n-i}$. Since $\#\P \subseteq \FP/\poly$, there is a small circuit taking care of this task, which we use inside our circuit $C_n$. Then, the prime $p$ for inputs of size $n$, together with a suitable Pratt certificate (which is always small) and the polynomial-size circuit $C_n$ constitute a polynomial-size $\SC$ proof of the formula $\varphi$. 
\end{proof}

At this point, our goal is to extend the previous lemma from $\SC$ to a concrete and natural Cook-Reckhow system. Our goal is to do this for Implicit Extended Frege. The idea again is that $\iEF$ (or rather its first-order counterpart, $\SOTEXP$) can prove the soundness of this system and thus simulate it. We shall then derandomize the $\SC$ protocol inside $\iEF$, to argue that $\iEF$ must satisfy the same connection to lower bounds as $\SC$ does in the lemma above.

Fortunately for us, the soundness of the Sum-Check protocol was recently proven by Khaniki in the right theory of bounded arithmetic.

\begin{theorem}[Soundness of the sum-check protocol {\cite[Thm. 15.3]{Kha23-PhD}}]\label{soundness}
    There are constants $c,k\in\bN$ such that $\SOT$ proves the following sentence: for every $n,\varphi,a,p,u,C$, if it holds that (i) $\varphi$ is a 3CNF in $n$ variables where $n\geq c$, and (ii) $\varphi(a)=1$ and, (iii) $2^{2n^3+n}<p \leq 2^{{(2n^3+n)}^{c_p}}$ and,
    (iv) $n^k\in\Log\Log$,
    then $$\foo_{r\in\bF_p^n}\big[V(p,u,\varphi,C,r)=1\big]\leq \frac{n\binom{2n}{3}}{p}.$$
\end{theorem}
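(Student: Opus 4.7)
The plan is to formalize, inside $\SOT$, the textbook soundness argument for the sum-check protocol. For each round $i$, consider the ``honest'' univariate polynomial that the Prover \emph{ought} to send given the previously chosen randomness $(a_1,\ldots,a_{i-1})$:
$Q_i^*(x_i) := \sum_{(x_{i+1},\ldots,x_n)\in\{0,1\}^{n-i}} P_\varphi(a_1,\ldots,a_{i-1},x_i,x_{i+1},\ldots,x_n).$
These polynomials have degree at most $3m$ over $\mathbb{F}_p$ and satisfy $Q_i^*(0)+Q_i^*(1) = Q_{i-1}^*(a_{i-1})$, with $Q_0^*$ understood as the constant $\sum_{x\in\{0,1\}^n} P_\varphi(x)$. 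The hypothesis $\varphi(a)=1$ gives $P_\varphi(a)=1$, so this sum is at least $1$; combined with $p > 2^n$, it is nonzero in $\mathbb{F}_p$, and hence $Q_1^*(0)+Q_1^*(1) \neq 0 = Q_0(a_0)$. Inductively, whenever $Q_{i-1}(a_{i-1}) \neq Q_{i-1}^*(a_{i-1})$ and the Verifier's check $Q_i(0)+Q_i(1) = Q_{i-1}(a_{i-1})$ passes, we are forced to have $Q_i \neq Q_i^*$; then, by the polynomial factor theorem over $\mathbb{F}_p$, the nonzero degree-$3m$ polynomial $Q_i - Q_i^*$ has at most $3m$ roots, so the probability over the random $a_i$ of the bad event $Q_i(a_i)=Q_i^*(a_i)$ is at most $3m/p$. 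A union bound over the $n$ rounds, including the final consistency check $P_\varphi(a_1,\ldots,a_n) = Q_n(a_n)$, yields total soundness error at most $n \cdot 3m/p \leq n\binom{2n}{3}/p$.

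To make this work in $\SOT$, the hypothesis $n^k \in \LogLog$ is essential: for a suitable $k$ it ensures $2^n \in \Log$ and $|p|$ polynomial in $n$, so that the coefficient lists of $Q_i$ and $Q_i^*$ (each of length $O(n^3)$ with entries of bit-length polynomial in $n$), together with the $2^n$-many evaluations defining $Q_i^*$, are manipulable by $\PV$-functions, and all arithmetic in $\mathbb{F}_p$ remains available. The polynomial factor theorem for $\mathbb{F}_p$---that a nonzero polynomial of degree $d$ has at most $d$ roots---is provable in $\SOT$ by polynomial induction on $d$ via the standard division algorithm. Crucially, the counting needed for the probability bound is \emph{not} a count over the full $\mathbb{F}_p^n$ (of size $p^n$, which lies outside $\Log$): instead, we partition the acceptance event by ``round in which the Prover first gets lucky,'' and for each round $i$ and each fixing of the prefix $(a_1,\ldots,a_{i-1})$, the bad values of $a_i$ form a set of size at most $3m$ (a $\Log$-sized count directly expressible as a $\PV$-relation), giving a per-slice ratio of $3m/p$ that aggregates by a union bound over the $n$ rounds into the stated $n\binom{2n}{3}/p$.

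The main obstacle will be organizing the inductive bookkeeping so that the round-by-round argument fits inside $\Sigma_1^b$-PIND and so that the final probability statement is expressible as a bounded-quantifier sentence. The natural invariant reads: ``either the Verifier has rejected by round $i$, or the set of prefixes $(a_1,\ldots,a_i)$ such that all $i$ checks pass and $Q_i = Q_i^*$ has size at most $i \cdot 3m \cdot p^{i-1}$.'' Making this work requires (i) showing that the arithmetization $\varphi \mapsto P_\varphi$ and its evaluation at arbitrary points of $\mathbb{F}_p^n$ are $\PV$-functions with the expected identity $P_\varphi(a)=1 \iff \varphi(a)=1$ on Boolean inputs; (ii) proving the univariate factor theorem as a polynomial-induction lemma; and (iii) combining the per-round counts into a final $\forall\Sigma_1^b$ inequality between $\PV$-computable quantities of polynomial size that bounds the fraction of accepting $r \in \mathbb{F}_p^n$ by $n\binom{2n}{3}/p$, without ever having to count the exponentially large sample space directly.
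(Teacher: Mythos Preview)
The paper does not prove this theorem; it is imported wholesale from Khaniki's thesis and used as a black box, so there is no in-paper argument to compare against. Your sketch follows the standard soundness proof and is structurally on the right track, but there is one genuine misconception worth fixing.

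You write that $p^n$ ``lies outside $\Log$'' and therefore design a round-by-round counting workaround to avoid counting over $\mathbb{F}_p^n$. This is backwards: hypothesis (iv), $n^k\in\LogLog$, is there precisely to put $p^n$ \emph{into} $\Log$. From $n^k=||x||$ we get $|x|\ge 2^{n^k-1}$; since $|p^n|=n|p|\le n(2n^3+n)^{c_p}$ is polynomial in $n$, a sufficiently large $k$ yields $p^n\le |x|$, i.e.\ $p^n\in\Log$. With $p^n\in\Log$, exact counting over $\mathbb{F}_p^n$ is a $\PV$-function and the displayed probability inequality is literally the $\PV$-relation defined in \S\ref{subsec:apc}. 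Moreover, your proposed workaround does not actually avoid the issue: your own invariant speaks of the size of a set of prefixes in $\mathbb{F}_p^i$, and at $i=n$ that is again a count over a domain of size $p^n$. So you need $p^n\in\Log$ either way---and you have it.

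Two smaller points. First, the invariant should track $Q_i(a_i)=Q_i^*(a_i)$ (agreement at the sampled point), not $Q_i=Q_i^*$ as polynomials; the latter is not what propagates. Second, the relevant degree bound for the Schwartz--Zippel step is the degree of $P_\varphi$ \emph{in the single variable} $x_i$, which is at most $m$ (each clause contributes at most one factor involving $x_i$), not the total degree $3m$; this is what makes $m\le\binom{2n}{3}$ match the stated bound $n\binom{2n}{3}/p$. With these corrections---use $n^k\in\LogLog$ to get $p^n\in\Log$ and count directly, phrase the invariant as $|G_i|\le i\cdot\deg\cdot p^{i-1}$ for $G_i=\{(a_1,\dots,a_i):\text{all checks pass and }Q_i(a_i)=Q_i^*(a_i)\}$---the remaining pieces (arithmetization as a $\PV$-function, the division-algorithm proof of the factor theorem under $\Sigma_1^b$-\textsc{PInd}) are the right ingredients for an $\SOT$-formalization.
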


Based on the soundness of the interactive protocol, we can now formalize the soundness of the $\SC$ proof system from \Cref{def:sc}. The arguments that follow can be seen as a concrete application of more sophisticated techniques employed by \citeauthor{Kha23} \cite{Kha23, Kha23-PhD}, who has studied interactive protocols in the context of defining new jump operators in proof complexity. 

\begin{definition}[The $\sound_c(\SC)$ formula]\label{soundnessdef}
    We denote by $\sound_c(\SC)$ the following $\forall\Sigma^b_1$ sentence, claiming the soundness of $\SC$: for all $\varphi,a,p,u,C,f$, where $|\varphi|>c$, there is a circuit $D$ of size $\leq \lceil|f|^{1/4}\rceil$ such that if
    $$\neg\left(\foo_{r\in\bF_p^n}[V(p,u,\neg\varphi,C,r)=1]\preceq^f_\epsilon \frac{3}{8}\right)$$
    holds, then at least one of the following conditions holds:
    \begin{enumerate}[label=(\roman*)]
    \item $|f|\neq|C|^{k_a}+k'_a$ or,
        \item $\tt^{1/4}(\lceil|f|^{1/4}\rceil,||f||,D,f)=1$ or,
        \item $p\not\in (2^{2n^3+n},2^{{(2n^3+n)}^{c_p}}]$ or,
        \item $\varphi(a)=1$,
    \end{enumerate}
    where $k_a, k'_a$ are the constants from \Cref{thm:emil-master} ensuring that $\setsize$ function works properly (see the remark below), $\epsilon={1/16}$ and $n$ is the number of variables of $\varphi$. In the definition of the displayed probability, we assume that $y=p^n$, that the circuit defining the set of strings accepted by $V$ has $m$ inputs, for the smallest integer $m$ such that $2^m\ge p^n$, and that it rejects all $r\ge p^n$. 
\end{definition}

A couple of remarks are in place. First, note that even if $V$ accepts with probability 1, the probability can be approximated in Definition \ref{soundnessdef} by a significantly smaller value because of the difference between $p^n$ and $2^m$. Another relevant point is that, as a closer look at the proof of \Cref{thm:emil-master} reveals, for each $C,2^n,e$, the function $\setsize(\alpha,C,2^n,e)$ calls $\alpha$ only once. In fact, it calls $\alpha$ on an input $x$ which depends only on $|C|,n,|e|$. This is needed for the formula $\sound_c(\SC)$ to be well-defined: in Definition \ref{soundnessdef} we do not supply the Size function with an oracle generating truth tables but with a single truth table $f$ representing a single answer of the oracle.

It now suffices to verify that the encoding of the soundness of $\SC$ is indeed provable in $\SOTEXP$.

\begin{proposition}[Soundness of $\SC$ inside $\SOTEXP$]
    There is a constant $c\in\bN$ such that $\SOTEXP\vdash\sound_c(\SC)$.
\end{proposition}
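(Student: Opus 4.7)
The plan is to derive $\sound_c(\SC)$ inside $\SOTEXP$ by combining the soundness of the sum-check protocol (\Cref{soundness}) with the approximate counting machinery of \Cref{thm:emil-master}. The key idea is that when $f$ is genuinely average-case hard, the function $\setsize(f,\cdot,\cdot,\cdot)$ faithfully approximates the true acceptance probability of the Verifier $V$, and that true probability is vanishingly small whenever $\neg\varphi$ is satisfiable, thanks to the sum-check soundness bound $n\binom{2n}{3}/p$.

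The argument splits on whether $f$ is $1/4$-hard. If it is not, there exists a circuit $D$ of size at most $\lceil|f|^{1/4}\rceil$ witnessing $\tt^{1/4}(\lceil|f|^{1/4}\rceil,||f||,D,f)=1$, and we take this $D$ as the witness to the existential quantifier, immediately satisfying disjunct (ii) and hence the entire implication. Otherwise, $f$ is hard; we pick $D$ to be any trivial small circuit and argue the implication by assuming the antecedent $\neg(\Pr\preceq_\epsilon^f 3/8)$ together with the negations of disjuncts (i), (iii), (iv), and derive a contradiction. Under those assumptions $|f|=|C|^{k_a}+k'_a$, the prime $p$ lies in the allowed interval, and $\neg\varphi(a)=1$, so \Cref{soundness} applied to $\neg\varphi$ gives
\[
\bigl|\bigl\{r\in\mathbb{F}_p^n : V(p,u,\neg\varphi,C,r)=1\bigr\}\bigr|\leq\frac{n\binom{2n}{3}}{p}\cdot p^n.
\]
Since $f$ is hard and $\setsize$ queries its oracle only once on an input depending only on $|C_V|,m,|e|$ (cf.\ the remark after \Cref{soundnessdef}), the proof of \Cref{thm:emil-master} carries over with $f$ supplying that single value, yielding $\setsize(f,C_V,2^m,e)\approx_{1/16}\bigl|\{r:V=1\}\bigr|$. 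Combining the two bounds and using $2^m\leq 2p^n$, one obtains $8\cdot\setsize(f,C_V,2^m,e)\leq 8n\binom{2n}{3}p^{n-1}+p^n\leq 3p^n$ for $n$ above a suitable threshold $c$ (since $p>4n\binom{2n}{3}$ by a wide margin). Unfolding the notation, this is exactly $\Pr\preceq_\epsilon^f 3/8$, contradicting the antecedent.

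The main technical obstacle is the bridge between \Cref{thm:emil-master}, originally stated inside ${\sf sHARD}^{\text{A}}$ with a global hardness oracle, and the present first-order setting where only a single truth table $f$ is supplied. The remark after \Cref{soundnessdef} is precisely what legitimises this transfer: because $\setsize$ queries its oracle exactly once on an input determined by the other parameters, substituting $f$ for that single call is unambiguous. The step from $\SOT$ to $\SOTEXP$ is required because $f$ and $p$ are of size exponential in $|\varphi|$, so we need the auxiliary exponentially large parameter implicit in the definition of $\SOTEXP$; this simultaneously supplies the $n^k\in\LogLog$ hypothesis required by \Cref{soundness}.
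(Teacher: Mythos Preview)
Your argument follows the same route as the paper: case-split on whether $f$ is $1/4$-hard, and in the hard case combine \Cref{thm:emil-master} with \Cref{soundness} to contradict the antecedent; the arithmetic you give is fine. One step is glossed over and deserves care: you write $\setsize(f,C_V,2^m,e)\approx_{1/16}|\{r:V=1\}|$ and then immediately treat this as a numerical inequality, but $\approx_\epsilon$ is by definition a relation witnessed by surjective circuits, not a bound between numbers. The paper makes the passage explicit: from the surjection $G:v\times(X_{C^*}\EmilU\epsilon 2^m)\twoheadrightarrow v\times\setsize$ it uses $\OEXP$ to extract a bijective restriction and then applies exact counting to conclude $\setsize\le|X_{C^*}|+\epsilon 2^m$. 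This exact-counting bridge, rather than the size of $f$ or $p$ (which are given as inputs), is the essential place where $\SOTEXP$ is needed beyond $\SOT$, together with supplying $n^k\in\LogLog$ for \Cref{soundness} as you note.
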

\begin{proof}
    Let $c\in\bN$ be a big enough constant that can be computed from the rest of the argument and $$\sound_c(\SC)\coloneqq\forall \varphi,a,p,u,C,f\exists D\Phi(\varphi,a,p,u,C,f,D)$$ the soundness formula in \Cref{soundnessdef} above. Let $\varphi$ be a 3DNF in $n$ variables such that $|\varphi|>c$, 
 and consider $a,p,u,C,f$. Then the following cases can happen:
 \begin{enumerate}[label=(\alph*)]
     \item If $|f|\neq|C|^{k_a}+k'_a$ or $p\not\in (2^{2n^3+n},2^{{(2n^3+n)}^{c_p}}]$, then $\Phi(\varphi,a,p,u,C,f,0)$ is trivially true.
     \item If there is a circuit $D$ of size $\leq \lceil|f|^{1/4}\rceil$ such that $\tt^{1/4}(\lceil|f|^{1/4}\rceil,||f||,D,f)=1$, then the formula $\Phi(\varphi,a,p,u,C,f,D)$ is trivially true.
     \item If the previous cases do not happen and moreover $$\neg\left(\foo_{r\in\bF_p^n}[V(p,u,\neg\varphi,C,r)=1]\preceq^f_\epsilon \frac{3}{8}\right)$$ holds, then we have that $8\cdot \setsize(f, C^*,2^m,e)> 3p^n$,  where $m$ is the smallest integer such that $2^m\ge p^n$, $\epsilon\coloneqq|e|^{-1}$ and $C^*(r)\coloneqq V(p,u,\neg\varphi,C,r)$. By the assumption $\hardav(f)$ holds and by the fact that we are over $\SOT$ and we can use $f$ as a parameter in polynomial induction for $\Sigma^b_1$ formulas, we can do approximate counting using \Cref{thm:emil-master}. (Here, we use also the fact that in order to derive the conclusion of \Cref{thm:emil-master}, the axioms postulating the properties  of $\alpha(x)$ for $x$'s not queried by $\setsize$ are not needed.) Hence there is a circuit $G$ and some $v\leq \poly(m\epsilon^{-1}|C^*|)$ such that
$$G:v\times(X_{C^*}\EmilU\:\epsilon 2^m)\rra v\times\setsize(f,C^*,2^m,e).$$
As we work in $\SOTEXP$ and $G$ is surjective, we can find  a subset $A\subseteq v\times(X_{C^*}\EmilU\:\epsilon 2^m)$ such that $G$ restricted to $A$ is a one-to-one function from $A$ to $v\times\setsize(f,C^*,2^m,e)$. Now we can apply exact counting (as we have $\OEXP$) and show that
 $$\setsize(f,C^*,2^m,e)\leq |X_{C^*}|+\epsilon 2^m.$$
 By the fact that $8\cdot \setsize(f, C^*,2^m,e)> 3p^n>3\cdot 2^m/2$, we have ${2^m}/{8}<|X_{C^*}|$.
Now if $\varphi(a)=0$, by \Cref{soundness} we get
$$\foo_{r\in\bF_p^n}\big[V(p,u,\neg\varphi,\pi,r)=1\big]\leq \frac{n\binom{2n}{3}}{p}.$$
Note that $|\varphi|>c$ which implies that $n$ is big enough and as $p>2^{2n^3+n}$ we get that ${n\binom{2n}{3}}/{p}\leq {1/8}$, which implies
$$\foo_{r\in\bF_p^n}\big[V(p,u,\neg\varphi,\pi,r)=1\big]\leq {\frac{1}{8}}.$$ As $C^*$ rejects all $r\ge p^n$, this implies that $|X_{C^*}|\leq {2^m}/{8}$ which leads to a contradiction, so $\varphi(a)=1$.
 \end{enumerate}
\end{proof}

The main technical issue now is that $\sound_c(\SC)$ is a $\forall\Sigma^b_1$ sentence and thus it does not translate into a propositional formula that $\iEF$ can reason about. Instead, we shall work on a quantified propositional system, but for this to make sense we need to know the quantified propositional proof system associated with $\SOTEXP$.

We invoke the following known $\TFNP$ characterization of the $\Sigma^b_1$ consequences of $\SOTEXP$, which identifies a \say{complete} $\Sigma^b_1$ sentence $\Psi$ such that any other $\Sigma^b_1$ consequence of $\SOTEXP$ reduces to it in $\GOS$.

\begin{theorem}[\cite{Kra90, KNT11, Kra16, BS17}]\label{tfnpV}
There is a $\forall\Sigma^b_1$ sentence $\Psi\coloneqq\forall x\exists y\psi(x,y)$ (the bound on $y$ is implicit in $\psi$) such that the following statements are true:
    \begin{enumerate}[label=(\roman*)]
        \item $\SOTEXP\vdash \forall x \exists y\psi(x,y)$;
        \item for any $\forall\Sigma^b_1$ sentence $\forall x\exists y \alpha(x,y)$ such that $\SOTEXP\vdash\forall x\exists y \alpha(x,y)$, there are $\PV$ functions $f$ and $g$ such that $\SOT\vdash \forall x,y(\psi(f(x),y)\to\alpha(x,g(x,y))).$
    \end{enumerate}
\end{theorem}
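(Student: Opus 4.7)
The plan is to construct $\Psi$ as a universal $\forall\Sigma^b_1$ consequence of $\SOTEXP$ encoding a witness-search problem for all such consequences, and then to obtain the reduction by a witnessing argument for $\SOTEXP$ performed inside $\SOT$. The candidate sentence $\psi(x,y)$ will parse $x$ as a triple $\langle \pi, n, c\rangle$ and assert: if $\pi$ is a valid $\SOT$-proof (in the sense of the indirect definition of $\SOTEXP$ from the preliminaries, i.e.\ a $\SOT$-proof of a sentence of the form $t(v)\le|u|\to \beta'(v,u,w)$ that witnesses a $\forall\Sigma^b_1$ statement $\forall v\exists w(|w|\le|v|^c\land\beta(v,w))$), then $y$ encodes a witness $w$ such that $\beta(n,w)$ holds. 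Because checking proof correctness and evaluating the open formula $\beta$ are both polynomial time, $\psi$ is $\Sigma^b_1$, and the bound on $y$ is read off from the proof's claimed polynomial bound~$c$.

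For part (i), the plan is to apply the standard witnessing theorem for $\SOTEXP$ (the $\EXP$-analogue of Buss's witnessing theorem, as developed in the references \cite{Kra90,KNT11,Kra16,BS17}): every $\forall\Sigma^b_1$ theorem of $\SOTEXP$ is witnessed by a function computable in time polynomial in $2^{|v|}$, i.e.\ by a $\PV$-function given access to $2^v$ as a parameter. Formalising this witnessing statement back inside $\SOTEXP$ yields that $\SOTEXP$ proves the reflection principle restricted to $\forall\Sigma^b_1$ sentences, and in particular proves $\forall x\exists y\,\psi(x,y)$: on input $x=\langle\pi,n,c\rangle$, either $\pi$ is not of the required form (in which case any $y$ works), or $\pi$ is a genuine proof and the witnessing theorem produces the required $w$, provably in $\SOTEXP$.

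For part (ii), given any $\forall\Sigma^b_1$ theorem $\forall x\exists y\,\alpha(x,y)$ of $\SOTEXP$, fix a concrete $\SOT$-proof $\pi_\alpha$ of the underlying relativised statement, together with the fixed polynomial bound $c_\alpha$ on $|y|$. Define the $\PV$-function $f(x)\coloneqq \langle \pi_\alpha, x, c_\alpha\rangle$ (a constant-depth syntactic operation) and the $\PV$-function $g(x,y)$ that simply projects the witness part out of $y$ (together with any format translation between the encoding used in $\Psi$ and the one used in $\alpha$). Then the implication $\psi(f(x),y)\to \alpha(x,g(x,y))$ reduces by unwinding the definition of $\psi$ to checking that a valid witness for the sentence proved by $\pi_\alpha$ is indeed a witness for $\alpha$, which is a purely syntactic fact provable in $\SOT$.

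The main obstacle, and the reason the result is nontrivial, is the bookkeeping needed to make the universal problem $\psi$ genuinely a single $\Sigma^b_1$ formula whose totality is provable in $\SOTEXP$ via a \emph{uniform} reflection, rather than in $\SOTEXP$ extended by the uniform reflection schema. Concretely, one must formalise the $\SOTEXP$-witnessing theorem inside $\SOTEXP$ itself; the standard trick is to use the indirect definition $\SOTEXP\vdash\varphi \iff \SOT\vdash t(x)\le|y|\to \varphi$ from the preliminaries, so that the witnessing becomes a $\SOT$-statement about polynomial-time computations operating on the exponentially-large object $y=2^x$. Once this indirection is in place, the argument of~\cite{KNT11,Kra16,BS17} essentially transfers verbatim from $\SOT$ to $\SOTEXP$, with $\PV$-functions playing the same role in both the definition of $f,g$ and in the reduction provable in $\SOT$.
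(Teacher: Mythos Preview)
The paper cites this theorem from the literature and does not prove it, so there is no ``paper's own proof'' to compare against directly. But your proposal has a genuine gap in part~(i).

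Your $\psi(x,y)$ encodes the uniform $\forall\Sigma^b_1$-reflection principle for $\SOTEXP$: on input a first-order $\SOT$-proof $\pi$ witnessing (via the indirect definition) that some $\forall v\exists w\,\beta(v,w)$ is a theorem of $\SOTEXP$, together with an instance $n$, output a witness $w$. The problem is that $\SOTEXP$ cannot prove $\forall x\exists y\,\psi(x,y)$. If it could, instantiate with a purported proof $\pi$ of the false $\forall\Sigma^b_1$ sentence $\forall v\exists w<1\,(w\neq w)$; totality of $\psi$ would then force that no such valid $\pi$ exists, i.e.\ $\SOTEXP$ would prove its own consistency, contradicting G\"odel's second incompleteness theorem. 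You correctly flag exactly this step as ``the main obstacle'' but then defer to the cited references, claiming their argument ``transfers verbatim''. It does not: those papers do \emph{not} formalise first-order reflection inside the theory.

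The constructions in \cite{Kra90,KNT11,Kra16,BS17} instead take $\Psi$ to be a concrete combinatorial search principle (a local-improvement or game-based principle over exponential-size structures presented by circuits) whose totality $\SOTEXP$ proves for direct combinatorial reasons, and then establish completeness via a witnessing/Herbrand analysis of $\SOTEXP$-derivations carried out in $\SOT$. A variant closer in spirit to your idea that \emph{does} work is to use \emph{propositional} rather than first-order reflection: let $\psi$ be the $\Sigma^q_1$-reflection problem for $\iEF$ (given an $\iEF$-proof of a $\Sigma^q_1$ formula and an assignment to its free variables, find a witness). Then part~(i) holds because $\SOTEXP$ proves the soundness of $\iEF$ (\Cref{thm:kra04-correspondence-iEF}(i)), and part~(ii) holds because the propositional translation produces the required $\iEF$-proofs uniformly in polynomial time. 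The point is that $\SOTEXP$ proves soundness of its associated \emph{propositional} system, but not $\Sigma^b_1$-soundness of its own first-order axiomatisation; your construction conflates the two.
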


In what follows, we shall work with Gentzen's system $\Gent$ extended with the propositional translation of the sentence $\Psi$ in the theorem above. We denote this system by $\GE\coloneqq\GOS+||\Psi||$ and prove the following key properties about it.

\begin{corollary}\label{wpV}
The following statements about $\GE$ hold:
\begin{enumerate}[label=(\roman*)]
    \item $\SOTEXP\vdash \sigmaref(\GE)$, i.e. the reflection principle for $\GE$ and $\Sigma^q_1$ formulas is provable in $\SOTEXP$;
    \item for every $\forall\Sigma^b_1$-sentence $\forall x\exists y\alpha(x,y)$, if $\SOTEXP\vdash\forall x\exists y\alpha(x,y)$, then there are polynomial-size $\GE$-proofs of the sequence of $\Sigma^q_1$-tautologies $\{||\exists y\alpha(y)||_n\}_{n\in\bN}$;
    \item if $\SOTEXP$ proves the soundness of a propositional proof system $S$, then $\GE\geq S$.
\end{enumerate}
\end{corollary}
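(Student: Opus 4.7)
The plan is to prove (ii) first, since it is the technical core, and then to derive (i) and (iii) from it by essentially formal manipulations. The main workhorse throughout is the TFNP characterisation of Theorem~\ref{tfnpV}, which reduces any $\forall\Sigma^b_1$-consequence of $\SOTEXP$ to the fixed sentence $\Psi$ already at the level of $\SOT$; once we are inside $\SOT$, the correspondence of Theorem~\ref{thm:KP90-correspondence-S12-G1*} hands us short $\GOS$-proofs of the resulting translations, which we combine with the extra axiom $\trans{\Psi}$ available in $\GE$.

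Concretely, for (ii), suppose $\SOTEXP\vdash\forall x\exists y\,\alpha(x,y)$. I would invoke Theorem~\ref{tfnpV}(ii) to obtain $\PV$-functions $f,g$ with
$$\SOT\vdash\forall x,y\,\bigl(\psi(f(x),y)\to\alpha(x,g(x,y))\bigr),$$
and then apply the $\SOT$/$\GOS$ correspondence to produce polynomial-size $\GOS$-proofs of $\trans{\psi(f(x),y)\to\alpha(x,g(x,y))}_n$, with $f$ and $g$ realized by polynomial-size circuits on the bits of $x$ and $y$. In $\GE$ we additionally have the axiom $\trans{\Psi}_n=\trans{\forall x'\exists y\,\psi(x',y)}_n$, into which I substitute the circuit computing $f(x)$ for $x'$, obtaining $\trans{\exists y\,\psi(f(x),y)}_n$. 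Combining these two via an $\exists$-introduction and a cut in $\Gent$, using $g(x,y)$ as the witness for the outer existential, yields a polynomial-size $\GE$-proof of $\trans{\exists z\,\alpha(x,z)}_n$.

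For (i), I observe that $\SOT$ already proves $\Sigma^q_1$-reflection of $\GOS$ (the soundness direction of Theorem~\ref{thm:KP90-correspondence-S12-G1*}), while Theorem~\ref{tfnpV}(i) gives $\SOTEXP\vdash\Psi$ and hence that $\SOTEXP$ proves each propositional instance $\trans{\Psi}_n$ is satisfiable (the witness being the $\PV$-Skolem function of $\psi$ supplied by the proof). Given a purported $\GE$-proof of a $\Sigma^q_1$-formula $\varphi$, I convert it to a $\GOS$-proof of an implication from the conjunction of the axiom instances of $\trans{\Psi}$ used to $\varphi$; applying the $\Sigma^q_1$-reflection of $\GOS$ and the truth of each axiom instance then yields the truth of $\varphi$, establishing $\sigmaref(\GE)$. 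For (iii), I view the $\forall\Pi^b_1$-sentence expressing soundness of $S$ trivially as $\forall\Sigma^b_1$ (by inserting a vacuous existential), apply (ii) to produce polynomial-size $\GE$-proofs of the reflection tautologies $\refprop_S$, and combine with the classical $\EF+\refprop_S\geq S$ together with the equivalence of $\GOS$ and $\EF$ on $\Pi^q_1$-formulas to conclude $\GE\geq S$.

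The main subtlety I expect is the bookkeeping in (ii): one must check that substituting the polynomial-size circuit for $f(x)$ into the axiom $\trans{\Psi}_n$, and then introducing the witness $g(x,y)$ for the outer existential, can both be carried out with only polynomial overhead in $\GOS$. Both are essentially consequences of how the propositional translation handles $\PV$-substitutions and of the standard quantifier rules of $\Gent$, so the non-trivial content of the corollary is concentrated in Theorems~\ref{tfnpV} and~\ref{thm:KP90-correspondence-S12-G1*}; the sequent-calculus manipulations themselves should be routine.
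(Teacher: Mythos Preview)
Your proposal is correct and follows essentially the same approach as the paper: for (ii) you reduce via Theorem~\ref{tfnpV} to an implication provable in $\SOT$, translate through Theorem~\ref{thm:KP90-correspondence-S12-G1*} to $\GOS$, and discharge the hypothesis using the axiom $\trans{\Psi}$ in $\GE$; for (i) you unpack a $\GE$-proof as a $\GOS$-proof from finitely many instances of $\trans{\Psi}$ and combine $\Sigma^q_1$-reflection for $\GOS$ (provable already in $\SOT$) with $\SOTEXP\vdash\Psi$; and (iii) is derived from (ii) via the reflection-based simulation. The paper proceeds in the order (i), (ii), (iii) rather than (ii) first, and one small slip in your phrasing is calling the witness for $\Psi$ a ``$\PV$-Skolem function'' --- the Skolem function of $\Psi$ is not in general $\PV$ (that is precisely what makes $\Psi$ capture the $\Sigma^b_1$-consequences of $\SOTEXP$ beyond $\SOT$); what you actually need and use is simply that $\SOTEXP\vdash\Psi$, so the axiom instances are true inside $\SOTEXP$.
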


\begin{proof}
    The proof of this corollary is standard and it is similar to the case of the usual correspondence for propositional proof systems and theories (see, for example, \cite{Pud20}). Here we only sketch the proof item by item.
    \begin{enumerate}[label=(\roman*)]
        \item Working in $\SOTEXP$, let $\pi$ be a $\GE$-proof of $\exists\bar{q}\varphi(\bar{p},\bar{q})$ and $a$ be an assignment for the $\bar{p}$ variables. Let $\psi'_1,...,\psi'_k$ be the substitution instances of $||\Psi||$ that are used in $\pi$. This means that there is a $\GOS$-proof $\pi'$ of the formula $\bigvee^k_{i=1}\neg\psi'_i\lor \exists\bar{q}\varphi(\bar{p},\bar{q})$. Since $\SOT$ proves the reflection principle for $\GOS$ for $\Sigma^q_1$-formulas \cite{krajicek95}, it knows that the formula is true.
        %which can be written as a $\Sigma^q_1$ formula. How?
        %By \Cref{thm:KP90-correspondence-S12-G1*} we get that $\left(\bigvee^k_{i=1}\neg\psi'_i\lor \exists\bar{q}\varphi(\bar{p},\bar{q})\right)[a/\bar{p}]$ is true. 
        Moreover, by Theorem \ref{tfnpV} the sentence $\Psi$ is provable in $\SOTEXP$ which immediately implies that $\left(\bigvee^k_{i=1}\neg\psi'_i\right)[a/\bar{p}]$ is false and hence %, following the proof of Theorem \ref{thm:KP90-correspondence-S12-G1*},
        $\exists\bar{q} \varphi(a,\bar{q})$ is true.
        \item Suppose $\SOTEXP\vdash \forall x \exists y\alpha(x,y)$ where $\forall x \exists y\alpha(x,y)$ is a $\Sigma^b_1$ sentence. Then by Theorem \ref{tfnpV}, there are $\PV$ functions $f$ and $g$ such that $\SOT\vdash \forall x,y(\psi(f(x),y)\to \alpha(x,g(x,y)))$. Then by \Cref{thm:KP90-correspondence-S12-G1*} there are polynomial-size $\GOS$-proofs of $$\{||\forall x,y(\psi(f(x),y)\to \alpha(x,g(x,y)))||_n\}_{n\in\bN}.$$ Note that $\GE$ has substitution instances of $||\Psi||$ which implies that using the rules of $\GOS$ we get polynomial-size $\GE$-proofs of the sequence $\{||\forall x\exists y\alpha(x,y)||_n\}_{n\in\bN}$.
        \item If $\SOTEXP$ proves $\refprop(S)$, then by the previous item, $\GE$ has polynomial-size proofs of the family $\{||\refprop(S)||_n\}_{n\in\bN}$ and so $\GE\geq S$ (see \Cref{subsec:proof-complexity-ref} for details on correspondences and simulations).
    \end{enumerate}
\end{proof}

Let us observe that $\GE$ is in fact equivalent to $\iEF$.

\begin{lemma}
\label{lemma:systems-equivalent}
    The proof systems $\iEF,\EF + \operatorname{Ref}_{\iEF}$ and $\GE$ are polynomially equivalent over propositional tautologies.
\end{lemma}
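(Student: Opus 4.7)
My plan is to prove the three systems equivalent by establishing a cycle of polynomial simulations over propositional tautologies:
$$\EF + \operatorname{Ref}_{\iEF} \;\geq\; \iEF \;\geq\; \GE \;\geq\; \EF + \operatorname{Ref}_{\iEF}.$$
The first simulation is an immediate instance of the standard fact (recalled in \Cref{subsec:proof-complexity-ref}) that $\EF + \operatorname{Ref}_S \geq S$ for every propositional proof system $S$ \cite{KP90}.

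For $\iEF \geq \GE$ over propositional tautologies, I would invoke item (i) of \Cref{wpV}: $\SOTEXP$ proves $\sigmaref(\GE)$, the $\Sigma^q_1$-reflection principle of $\GE$, which in particular specializes to soundness for propositional ($\Pi^q_1$) conclusions. Applying item (iii) of \Cref{thm:kra04-correspondence-iEF} then yields $\iEF \geq \GE$ when restricted to propositional tautologies.

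For $\GE \geq \EF + \operatorname{Ref}_{\iEF}$, I would observe that $\SOTEXP$ proves the soundness of $\iEF$ by item (i) of \Cref{thm:kra04-correspondence-iEF}, so the propositional family $\{\trans{\operatorname{Ref}_{\iEF}}_n\}_{n\in\bN}$ corresponds to a $\forall\Pi^b_1$ (hence $\forall\Sigma^b_1$) theorem of $\SOTEXP$. Item (ii) of \Cref{wpV} then provides polynomial-size $\GE$-proofs of $\trans{\operatorname{Ref}_{\iEF}}_n$. Combining this with the fact that $\GOS$ polynomially simulates $\EF$ on $\Pi^q_1$ \cite[Thm. 4.1.3]{krajicekBOOK}, any $(\EF + \operatorname{Ref}_{\iEF})$-proof can be converted into a short $\GE$-proof by simulating each $\EF$ step in $\GE$ and discharging each use of an $\operatorname{Ref}_{\iEF}$ axiom via the short $\GE$-proof of its propositional translation.

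I do not anticipate deep technical obstacles: each simulation reduces to a direct application of a correspondence result already established. The only point requiring some care is to confirm that the $\Sigma^q_1$-reflection of $\GE$ used in the second step really specializes to soundness for propositional conclusions, and that \Cref{thm:kra04-correspondence-iEF}(iii) applies to the quantified system $\GE$ in this restricted form. Both are routine consequences of the fact that propositional tautologies sit inside $\Sigma^q_1$.
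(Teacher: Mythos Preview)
Your proposal is correct and uses essentially the same ingredients as the paper: the standard fact $\EF+\operatorname{Ref}_S\ge S$, together with the correspondence results \Cref{thm:kra04-correspondence-iEF} and \Cref{wpV}. The only organizational difference is that you close a three-cycle $\EF+\operatorname{Ref}_{\iEF}\ge\iEF\ge\GE\ge\EF+\operatorname{Ref}_{\iEF}$, unpacking the last step via item~(ii) of \Cref{wpV}, whereas the paper argues $\iEF\equiv\GE$ and $\iEF\equiv\EF+\operatorname{Ref}_{\iEF}$ separately, each time invoking item~(iii) of the relevant correspondence (noting that $\SOTEXP$ proves the soundness of $\GE$, of $\iEF$, and of $\EF+\operatorname{Ref}_{\iEF}$); this is marginally slicker but not materially different.
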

\begin{proof}
    By item (iii) of \Cref{wpV} and item (iii) \Cref{thm:kra04-correspondence-iEF}, $\iEF$ and $\GE$ polynomially simulate each other. As mentioned in \Cref{subsec:proof-complexity-ref}, $\EF + \operatorname{Ref}_{\iEF}\geq \iEF$. It is also easy to see that $\SOTEXP$ proves the soundness of $\EF + \operatorname{Ref}_{\iEF}$, which by item (iii) of \Cref{thm:kra04-correspondence-iEF} gives us $\iEF\geq \EF + \operatorname{Ref}_{\iEF}$.
\end{proof}

We are now ready to define the extension of $\iEF$ for which our main theorem holds. Recall that the propositional formulas $\ttt^{\text{avg}}_{1/4}(h_n,2^{n/4})$ were defined in \Cref{subsec:apc} and state the average-case hardness of a Boolean function $h_n$ represented as a truth table (hence the name tt).

\begin{definition}[The systems $\iEF^{\operatorname{tt}}$]
\label{def:iEF-tt}
Let $h = \{ h_n \}_{n \in \bbN}$ be some family of Boolean functions, and let $n_0 \in \bbN$. We denote by $\iEF^{\operatorname{tt}(h, n_0)} \coloneqq \GE + \{\ttt^{\text{avg}}_{1/4}(h_n,2^{n/4})\}_{n \geq n_0}$ the system that extends $\GE$ by the axioms claiming the hardness of $h_n$, for $n\ge n_0$.
\end{definition}

Note that $\iEF^{\operatorname{tt}(h, n_0)}$ is a family of proof systems, parameterized by a Boolean function family $h$ and some threshold parameter $n_0$. Observe that depending on the choice of $h$ and $n_0$, the  system $\iEF^{\operatorname{tt}(h, n_0)}$ may not be a Cook-Reckhow system: if $h$ is not a hard function, or $n_0$ is not large enough, we will be adding axioms which are not tautologies, and the system will be inconsistent; and even if $h$ is hard and $n_0$ is large enough, the system may require advice in order to verify the proofs. As we shall see, however, these degenerate instantiations of $\iEF^{\operatorname{tt}(h, n_0)}$ are not a problem.

What is more important, the systems $\iEF^{\operatorname{tt}(h, n_0)}$, regardless of their consistency, always simulate $\SC$.

\begin{lemma}\label{sim}
    Let $h$ be family of Boolean functions and let $n_0 \in \bbN$. The system $\iEF^{\operatorname{tt}(h, n_0)}$ polynomially simulates $\SC$ over 3DNF tautologies.
\end{lemma}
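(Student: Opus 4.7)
The plan is to propositionally translate the soundness statement $\sound_c(\SC)$ (already proved in $\SOTEXP$ in the preceding proposition) via \Cref{wpV}, substitute the given $\SC$-proof into the translation, and discharge the disjunction in $\sound_c(\SC)$ using the hardness axioms of $\iEF^{\operatorname{tt}(h,n_0)}$. Fix a 3DNF tautology $\bar\varphi$ over $n$ variables with $\SC$-proof $\langle\bar p,\bar u,\bar C\rangle$. By item (ii) of \Cref{wpV}, $\GE$ has polynomial-size proofs of the $\Sigma^q_1$-translation $||\sound_c(\SC)||_N$, which has free Boolean variables encoding $\varphi, a, p, u, C, f$ and an existential quantifier over a circuit $D$ of size at most $\lceil|f|^{1/4}\rceil$. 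I substitute $\bar\varphi, \bar p, \bar u, \bar C$ into the translation; pick an index $\ell$ (padding $\bar C$ if necessary) so that $2^\ell = |\bar C|^{k_a} + k'_a$; and set $f$ to the truth table of $h_\ell$. The assignment $a$ is left as free propositional variables. Since $\ell = O(\log|\bar C|) = O(\log|\pi|)$, the truth table of $h_\ell$ has polynomial size in $|\pi|$, so the substituted proof remains polynomial-size.

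It now suffices to derive the disjunct (iv), $\bar\varphi(a) = 1$, by ruling out the other three disjuncts and establishing the premise. Condition (i) fails by the choice of $\ell$; condition (iii) fails by direct arithmetic on $\bar p$ (a polynomial-time check on the $\SC$-proof). The premise $\neg(\Pr \preceq^{h_\ell}_\epsilon 3/8)$ unfolds to $8 \cdot \setsize(h_\ell, C^*, 2^m, e) > 3 \bar p^n$, which I verify by simulating inside $\EF$ the polynomial-time evaluation of $\setsize$ on these specific inputs: $V$ accepts on every $r < \bar p^n$ by validity of the $\SC$-proof, so $|X_{C^*}| = \bar p^n$, and since $h_\ell$ is hard, $\setsize$ lies within $\epsilon \cdot 2^m \le \bar p^n/8$ of $|X_{C^*}|$, yielding $\setsize \ge 7 \bar p^n/8$. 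Finally, the axiom $\ttt^{\text{avg}}_{1/4}(h_\ell, 2^{\ell/4})$ of $\iEF^{\operatorname{tt}(h, n_0)}$ rules out (ii) for every circuit $D$ of size at most $2^{\ell/4}$. An existential-elimination step in $\GE$ extracts a witness $D_0$, and propositional combination of all the facts above yields (iv); since $a$ is free, this is the desired polynomial-size $\iEF^{\operatorname{tt}(h, n_0)}$-proof of $\bar\varphi$.

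The main obstacle is the arithmetic verification of the premise: although $\setsize$ is polynomial-time, one must confirm that its output on the specific chosen inputs genuinely exceeds $3 \bar p^n/8$. Semantically this is guaranteed by \Cref{thm:emil-master} via the hardness of $h_\ell$, but propositionally one must include in the proof an $\EF$-evaluation of the specific polynomial-time computation against the fixed truth table of $h_\ell$. The remaining steps (size-matching for (i), arithmetic for (iii), the hardness axiom for (ii), and the propositional combination) are routine $\Gent$-style reasoning.
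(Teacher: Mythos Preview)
Your approach is essentially the paper's: translate $\sound_c(\SC)$ propositionally, substitute the $\SC$-proof data together with the truth table $h_\ell$, and discharge every disjunct except (iv) using the hardness axiom and direct evaluation. The paper unfolds the translation explicitly via \Cref{tfnpV} and \Cref{thm:KP90-correspondence-S12-G1*} (obtaining a $\GOS$-proof of $\psi(l(\cdot),y)\to\Phi(\cdot,g(\cdot,y))$ and then discharging $\psi$ with the $||\Psi||$ axiom of $\GE$), whereas you invoke item (ii) of \Cref{wpV} as a black box and then perform an $\exists$-left step on $D$; these are the same argument at two levels of abstraction. A minor phrasing issue: your ``existential-elimination extracts a witness $D_0$'' is really the eigenvariable rule, and the hardness axiom (which has the circuit variable free) is exactly what lets you kill disjunct (ii) for that eigenvariable.

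There is one case you omit. Your verification of the premise $8\cdot\setsize(h_\ell,C^*,2^m,e)>3\bar p^{\,n}$ rests on the \emph{semantic} hardness of $h_\ell$: only when $h_\ell$ is genuinely average-case hard does \Cref{thm:emil-master} guarantee that the computed value lies within $\epsilon\,2^m$ of $|X_{C^*}|$, so that the $\EF$-evaluation of $\setsize$ on these fixed constants actually outputs a number exceeding the threshold. If $h$ is not hard (or $n_0$ is too small, or your chosen $\ell$ falls below $n_0$), the computation may well return a small value and the argument stalls. The paper disposes of this in its first sentence: if the added axioms are not tautologies then $\iEF^{\operatorname{tt}(h,n_0)}$ is inconsistent, hence trivially polynomially bounded, hence simulates every system. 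Add this case split (and pad $\bar C$ so that $\ell\ge n_0$ as well as $2^\ell=|\bar C|^{k_a}+k'_a$); once you do, your ``main obstacle'' evaporates, because in the surviving sound case the semantic guarantee holds and the $\EF$-evaluation of a polynomial-time function on fixed inputs is routine.
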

\begin{proof}
If the system $\iEF^{\operatorname{tt}(h, n_0)}$ is unsound because the added axioms are not tautologies, then the system is trivially polynomially bounded and so it simulates every other proof system.

Suppose the added axioms are indeed tautologies, meaning that the function $h$ is indeed hard on average.

      Let $\varphi_1$ be a 3DNF in $n_1$ variables and $\<p_1,u_1,C_1\>$ be a $\SC$-proof of $\varphi_1$. This means 
    $$2^{2n_1^3+n_1}<p_1\leq 2^{{(2n_1^3+n_1)}^{c_p}} \land \foo_{r\in\bF_{p_1}^{n_1}}\big[V(p_1,u_1,\neg\varphi_1,C_1,r)=1\big]=1.$$   
Note that by \Cref{soundness} and \Cref{wpV}, there are $\PV$ functions $l,g$ such that 
$$\SOT\vdash\forall \varphi,a,p,u,C,f\left(\psi(l(\varphi,a,p,u,C,f),y)\to\Phi(\varphi,a,p,u,C,f,g(\varphi,a,p,u,C,f,y))\right),$$ where $\sound_c(\SC)\coloneqq\forall \varphi,a,p,u,C,f\exists D\Phi(\varphi,a,p,u,C,f,D)$. Let $s\coloneqq|\<p,u,C\>|$. Then by \Cref{thm:KP90-correspondence-S12-G1*} there is a $s^{O(1)}$-size $\GOS$-proof of 
    $$||\forall \varphi,a,p,u,C,f\left(\psi(l(\varphi,a,p,u,C,f),y)\to\Phi(\varphi,a,p,u,C,f,g(\varphi,a,p,u,C,f,y))\right)||_{s'},$$ where $s'\coloneqq\poly(s)$. Let us rewrite the previous quantified propositional formula as $||\Psi'||\to||\Phi'||$ with the right range of parameters such that $p_1,u_1,\varphi_1,C_1$ are substituted in the formula in their corresponding places. Now we take the substitution instance $\LB_{1/4}(h_{n'},2^{n'/4})$ %by $g(\varphi_1,a,p_1,u_1,C_1,f,y)$
    where $|h_{n'}|\coloneqq|C_1|^{k_a}+k'_a$ and we substitute $h_{n'}$ to the variables corresponding to $f$ and therefore the disjunct which corresponds to $\tt$ disappears from $||\Phi'||$ when we %substitute $h_{n'}$ in for $f$ in $||\Phi'||$ 
    apply the rules of $\GOS$. Moreover, it is not hard to verify that after the substitutions  every other disjunct which corresponds to subformulas of $\sound_c(\SC)$ from  \Cref{soundnessdef} disappears except $\varphi_1$. So what we have is $\GOS$-proof of $||\Psi''||(\bar{x},\bar{y})\to\varphi_1(\bar{x})$ ($\bar{x}$ and $\bar{y}$ are disjoint variables) where $||\Psi''||$ is a substitution instance of $||\Psi'||$. %when $h_{n'}$ is substituted for $f$. 
    Since we are working in $\GE$, we have the substitution instance $\exists\bar{y}||\Psi''||(\bar{x},\bar{y})$ and therefore using the rules of $\GOS$ we get a short $\GE$-proof of $\varphi_1(\bar{x})$.
\end{proof}

Our main theorem now easily follows.

\begin{theorem}[Main theorem]
\label{thm:main-tech}
    Let $h$ be a family of Boolean functions and let $n_0 \in \bbN$. If the system $\iEF^{\operatorname{tt}(h, n_0)}$ is not polynomially bounded, then $\#\P \not\subseteq \FP/\poly$.
\end{theorem}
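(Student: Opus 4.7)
The plan is to prove the contrapositive: assuming $\#\P \subseteq \FP/\poly$, I will show that $\iEF^{\operatorname{tt}(h, n_0)}$ is polynomially bounded on all of $\TAUT$. The argument is essentially a two-line combination of the preceding lemmas, with a standard Extended-Frege maneuver filling the gap between 3DNF tautologies and arbitrary propositional tautologies.

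The first step is to chain \Cref{SCBounded} and \Cref{sim}. Under the hypothesis $\#\P \subseteq \FP/\poly$, \Cref{SCBounded} tells us that $\SC$ is polynomially bounded over 3DNF tautologies, and \Cref{sim} tells us that $\iEF^{\operatorname{tt}(h, n_0)}$ polynomially simulates $\SC$ on precisely that class of formulas. The composition of a polynomial size-bound with a polynomial simulation is again polynomial, so $\iEF^{\operatorname{tt}(h, n_0)}$ is polynomially bounded over 3DNF tautologies. Note that the two degenerate cases that could arise from a bad choice of $h$ or $n_0$ (namely $h$ not being hard on average, or $n_0$ being too small so that some of the added axioms fail to be tautologies) are harmless: in those cases the extended system is already trivially polynomially bounded because its range strictly contains $\TAUT$, and the conclusion of the theorem is immediate.

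The second step is to lift the bound from 3DNFs to arbitrary propositional tautologies. This is a textbook Extended-Frege reduction: given any $\varphi \in \TAUT$, introduce a fresh extension variable $y_\psi$ for each subformula $\psi$ of $\varphi$ together with the defining equivalences $y_\psi \leftrightarrow (\ldots)$ in terms of the extension variables for the immediate subformulas. The conjunction of these extension axioms together with $\neg y_\varphi$ is (equivalent in polynomial size to) a 3CNF $\widetilde\chi$, whose negation $\widetilde\varphi$ is a 3DNF of size polynomial in $|\varphi|$, and $\varphi$ is a tautology if and only if $\widetilde\varphi$ is. Since $\iEF^{\operatorname{tt}(h, n_0)}$ extends $\GE$ and hence $\EF$ (by \Cref{lemma:systems-equivalent}), it can derive $\widetilde\varphi \to \varphi$ in polynomially many lines. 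Composing a polynomial-size proof of $\widetilde\varphi$ (given by the previous paragraph) with this derivation yields a polynomial-size proof of $\varphi$.

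I do not anticipate a genuine obstacle in this final theorem: all the conceptually heavy components have already been discharged in the preceding lemmas and propositions, namely the formalization of the sum-check soundness inside $\SOTEXP$, the approximate-counting machinery needed to state it on the propositional side, and the $\TFNP$-based identification of $\GE$ as the quantified propositional proof system corresponding to $\SOTEXP$. The only mild subtlety is making sure the Tseitin-style reduction in the previous paragraph produces a 3DNF (not just a bounded-depth formula) and that the $\EF$-derivation of $\widetilde\varphi \to \varphi$ is genuinely polynomial in $|\varphi|$, which are both standard; no further ideas are needed.
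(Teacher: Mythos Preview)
Your proposal is correct and follows essentially the same route as the paper: chain \Cref{SCBounded} with \Cref{sim} (in contrapositive form) to conclude. The paper's own proof is terse and leaves the passage from ``polynomially bounded over 3DNF tautologies'' to ``polynomially bounded over all of $\TAUT$'' implicit, whereas you spell it out via the standard Tseitin-style extension-variable reduction and the fact that $\iEF^{\operatorname{tt}(h,n_0)}$ extends $\EF$; this extra care is appropriate and introduces no new ideas beyond what the paper intends.
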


\begin{proof}
    By \Cref{sim} above, for every choice of $h$ and $n_0$, the system $\iEF^{\operatorname{tt}(h, n_0)}$ polynomially simulates $\SC$, so if $\iEF^{\operatorname{tt}(h, n_0)}$ is not polynomially bounded, then $\SC$ is not either. Then, by the contrapositive of \Cref{SCBounded}, $\#\P \not\subseteq \FP/\poly$.
\end{proof}

As discussed, depending on the choice of $h$ and $n_0$, the system $\iEF^{\operatorname{tt}(h, n_0)}$ may not be sound and thus possibly not a Cook-Reckhow system. However, for any fixed choice of a uniform candidate hard function, the system is concrete and exhibits the desired connection that proof complexity lower bounds for it imply strong circuit lower bounds. In particular, if there exist functions in $\NE \cap \coNE$ average-case hard for subexponential-size circuits, then we recover the version of the theorem presented in the introduction (\Cref{thm:main}).

We note that there is the possibility that $\iEF$, given its strength, already proves such strong circuit lower bounds for some Boolean function.  It is thus worth to mention the following corollary.

\begin{corollary}
    Suppose there exists a sequence of Boolean functions $\{h_n\}_{n\in\mathbb{N}}$ for which $\iEF$ has polynomial-size proofs of the formula family $\{\ttable_{1/4}(h_n, 2^{n/4})\}_{n\geq n_0}$ for some sufficiently large $n_0 \in \bbN$. If $\iEF$ is not polynomially bounded, then $\#\P \not\subseteq \FP/\poly$.
\end{corollary}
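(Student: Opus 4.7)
The plan is to reduce the corollary directly to \Cref{thm:main-tech} by showing that under the stated hypothesis, the systems $\iEF$ and $\iEF^{\operatorname{tt}(h, n_0)}$ are polynomially equivalent over propositional tautologies. With such equivalence in hand, any superpolynomial lower bound for $\iEF$ immediately transfers to a superpolynomial lower bound for $\iEF^{\operatorname{tt}(h, n_0)}$, from which $\#\P \not\subseteq \FP/\poly$ follows by applying \Cref{thm:main-tech}.

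One direction of the equivalence is immediate. By \Cref{lemma:systems-equivalent}, $\iEF$ and $\GE$ are polynomially equivalent, and since $\iEF^{\operatorname{tt}(h, n_0)}$ is by definition $\GE$ augmented with additional axioms, it trivially polynomially simulates $\iEF$. For the other direction, I would argue as follows. Every $\iEF^{\operatorname{tt}(h, n_0)}$ proof can be viewed as a $\GE$ proof together with finitely many invocations of substitution instances of the truth-table axioms $\ttable_{1/4}(h_n, 2^{n/4})$. By hypothesis, each such axiom has a polynomial-size $\iEF$ proof for $n \geq n_0$, and since $\iEF$ admits substitution as an efficient derived rule, each used substitution instance also has a polynomial-size $\iEF$ proof. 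Inlining these in place of the axiom invocations yields an $\iEF$ proof of the same tautology whose size is polynomial in the original.

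Combining the two simulations, $\iEF$ and $\iEF^{\operatorname{tt}(h, n_0)}$ are polynomially equivalent over propositional tautologies. Contrapositively, if $\iEF$ is not polynomially bounded then neither is $\iEF^{\operatorname{tt}(h, n_0)}$, and \Cref{thm:main-tech} delivers $\#\P \not\subseteq \FP/\poly$. The only mildly non-trivial step is ensuring that substitution instances of the truth-table axioms inherit polynomial-size $\iEF$ proofs from the axioms themselves, but this is standard for $\iEF$-like systems: one can effectively replay the assumed proof of $\ttable_{1/4}(h_n, 2^{n/4})$ with the variables describing the circuit replaced by the formulas specified by the substitution.
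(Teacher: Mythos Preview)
Your proposal is correct and follows exactly the same approach as the paper: you show that under the hypothesis the systems $\iEF$ and $\iEF^{\operatorname{tt}(h, n_0)}$ are polynomially equivalent over propositional tautologies, and then invoke \Cref{thm:main-tech}. The paper's proof is a one-liner stating precisely this equivalence; you simply spell out the (standard) details of both simulation directions, including the point about substitution instances, which the paper leaves implicit.
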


\begin{proof}
    If there is such a function $h$ and threshold $n_0$, then $\iEF^{\operatorname{tt}(h, n_0)}$ is polynomially equivalent to $\iEF$ itself, so by \Cref{thm:main-tech} the corollary follows.
\end{proof}

\subsection*{Acknowledgments}
Independently, Albert Atserias suggested to us to consider the possibility of using interactive proof systems in order to derive circuit lower bounds from proof complexity lower bounds.

We would like to thank Pavel Pudlák for useful comments and suggestions. We are also grateful to different anonymous reviewers for several comments and references.

This work was done in part while the first author was visiting the University of Oxford and the Institute of Mathematics of the Czech Academy of Sciences.

Noel Arteche was supported by the Wallenberg AI, Autonomous Systems and Software Program (WASP) funded by the Knut and Alice Wallenberg Foundation. Erfan Khaniki was supported by the Institute of Mathematics of the Czech Academy of Sciences (RVO 67985840) and GAČR grant 19-
27871X. J\'an Pich received support from the Royal Society University Research Fellowship URF$\backslash$R1$\backslash$211106 \say{Proof complexity and circuit complexity: a unified approach}.

For the purpose of Open Access, the authors have applied a CC BY public copyright license to any Author Accepted Manuscript version arising from this submission.

% Bibliography
%\newpage
%\nocite{*}
\printbibliography[
        heading=bibintoc
]

\end{document}